\newcommand{\R}{\mathbb R}
\newcommand{\C}{\mathbb C}
\newcommand{\beq}{\begin{equation}
}
\newcommand{\eeq}{\end{equation}}
\newcommand{\pr}{^{\prime}}
\newtheorem{thm}{Theorem}[section]
\newtheorem{cor}[thm]{Corollary}
\newtheorem{lem}[thm]{Lemma}
\newtheorem{prop}[thm]{Proposition}
\newtheorem{defn}[thm]{Definition}
\theoremstyle{remark}
\newtheorem{rem}[thm]{Remark}
\numberwithin{equation}{section} \makeatletter
\def\@cite#1#2{#1\if@tempswa , #2\fi}
\def\@biblabel#1{$^{\hbox{\scriptsize{#1}}}$}
\begin{document}

\title[]{Edge of the Wedge Theorem for Tempered Ultrahyperfunctions}

\author{E. Br\"uning \and S. Nagamachi }

\address[E. Br\"uning]{School of Mathematical Sciences,
University of KwaZulu-Natal, Private Bag X54001, Durban 4000,
South Africa} \email{bruninge@ukzn.ac.za}

\address[S. Nagamachi]{
Emeritus professor, Faculty of
Engineering, The University of Tokushima\\ Tokushima 770-8506,
Japan} \email{shigeaki-matematiko@memoad.jp}

\begin{abstract}
Tempered ultrahyperfunctions do not have the same type of localization properties as Schwartz distributions or Sato hyperfunctions; but the localization properties seem to play an important role in the proofs of the various versions of the edge of the wedge theorem. Thus, for tempered ultrahyperfunctions, one finds a global form of this result in the literature, but no local version.

 In this paper we propose and prove a formulation of the edge of the wedge theorem for tempered ultrahyperfunctions, both in global and local form. We explain our strategy first for the one variable case. We argue that in view of the cohomological definition of hyperfunctions and ultrahyperfunctions, the global form of the edge of the wedge theorem is not surprising at all.
\end{abstract}

\maketitle \tableofcontents

\section{Introduction}
Nowadays, i.e., about fifty years after its discovery, there are
many versions of the `edge of the wedge theorem' which originated
through the challenges of relativistic quantum field theory and
the theory of dispersion relations for scattering amplitudes (see
[\cite{BV60}]). In quantum field theory the important fact that
the Wightman functions are holomorphic in the region of all
totally space-like points is
 shown  by a simple application of the edge of the wedge theorem.

The statements in these various versions of the edge of the wedge
theorem assert the extendability of holomorphic functions defined
in wedges in complex space $\C^n$ with edge in real space $\R^n$,
under certain conditions.

Recall the original version of Bogoliubov [\cite{BV60}]:
\begin{thm}[Bogoliubov] Let $C \subset \R^n$ be an open {proper} cone with vertex at the origin and denote by $C_r = C \cap B(0,r)$ the intersection of $C$ with the open ball of radius $r>0$ centered at the origin; for an open nonempty set $E \subset \R^n$ introduce the wedges $W^{\pm}=E \pm i C_r$ with common edge $E$. If now $F_1$ is a holomorphic function on $W^+$ and $F_2$ is holomorphic on $W^-$ and if $F_1$ and $F_2$ have the same boundary values on $E$,
\beq \label{equal-bv} \lim_{\stackrel{y \to 0}{y \in C_r}} F_1(x
+i y)= \lim_{\stackrel{y \to 0}{y \in C_r}}F_2(x -i y), \qquad x
\in E, \eeq then $F_1$ and $F_2$ can be extended holomorphically
to a complex neighborhood $\Omega$ of $W^+ \cup E \cup W^-$.
\end{thm}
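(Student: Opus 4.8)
The plan is to build one continuous function out of $F_1$ and $F_2$, to reduce its holomorphy to the one--complex--variable edge of the wedge, and then to recover joint holomorphy on an open subset of $\C^n$ by a theorem of several complex variables. For the first step: by \eqref{equal-bv} the holomorphic function $F_1$ on $W^+$ extends continuously to $E$, and likewise $F_2$ on $W^-$; since the two continuous extensions to $E$ coincide, setting
\[
F(z)=F_1(z)\ (z\in W^+),\qquad F(z)=\text{common boundary value}\ (z\in E),\qquad F(z)=F_2(z)\ (z\in W^-)
\]
defines a function continuous on $W:=W^+\cup E\cup W^-$ and holomorphic on the open set $W^+\cup W^-$. (A technical point: \eqref{equal-bv} is only a pointwise statement, so to make $F$ genuinely continuous one either takes the boundary values to be continuous functions on $E$ from the outset --- the usual formulation --- or derives local uniformity of the limits from a normal--families argument, using that $F_1,F_2$ are locally bounded near $E$.)

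Next I would reduce to one variable. It suffices to produce, for every $x^0\in E$, a complex neighborhood of $x^0$ to which $F$ extends holomorphically; moreover, shrinking $C$ to an open convex subcone and $r$ to a smaller radius is harmless, since $F_1,F_2$ remain holomorphic on the original wedges and the pieces can be reassembled at the end by a routine analytic--continuation patching. So assume $C$ convex and, after a translation, $x^0=0$. For $\eta\in C$ with $|\eta|$ small, consider the complex line $\zeta\mapsto z(\zeta)=i\zeta\eta$: for small $\operatorname{Re}\zeta>0$ it lies in $W^+$, for small $\operatorname{Re}\zeta<0$ in $W^-$, and for $\operatorname{Re}\zeta=0$ in $E$. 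Hence $g_\eta(\zeta):=F(z(\zeta))$ is continuous on a rectangle about a segment $I$ of the imaginary axis and holomorphic on the two half--rectangles where $\operatorname{Re}\zeta>0$ and where $\operatorname{Re}\zeta<0$, so by Morera's theorem $g_\eta$ is holomorphic on the whole rectangle. Thus $F$ is holomorphic across $E$ along each such complex line --- and, by the same argument, along any analytic disc whose boundary lies in $W^+\cup E\cup W^-$.

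The remaining, and central, task is to pass from lines to an open set: the complex lines $\C\eta$, $\eta\in C$, sweep out only a thin subset of $\C^n$, so the previous step alone does not give holomorphy on a genuine neighborhood of $0$. To overcome this one invokes several--complex--variables machinery --- either the Kontinuit\"atssatz applied to a Hartogs figure assembled from a family of analytic discs attached to $W^+\cup E\cup W^-$ and filling out a neighborhood of $0$ (the classical route going back to Bremermann--Oehme--Taylor and Rainwater), or, after adapting coordinates to the convex cone $C$, Bochner's tube theorem in a boundary--value form applied to a tube $E_0+i\bigl(C_{r_0}\cup\{0\}\cup(-C_{r_0})\bigr)$, whose fibre has $0$ in the interior of its convex hull, which produces a holomorphic extension of $F$ to a full complex ball about $0$. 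Doing this for every $x^0\in E$ and gluing with $F_1,F_2$ on $W^\pm$ yields the required neighborhood $\Omega$ of $W^+\cup E\cup W^-$.

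I expect this last step to be the main obstacle: the one--variable lemma only controls $F$ along complex lines or discs, and turning that into joint holomorphy on an open subset of $\C^n$ genuinely requires a several--complex--variables input (continuity theorem, Hartogs figure, or tube theorem), with no elementary shortcut. A second, subtler difficulty is the gluing step itself, namely passing from the pointwise hypothesis \eqref{equal-bv} to the continuity the one--variable argument needs; it is precisely the analogue of this passage that becomes problematic for tempered ultrahyperfunctions, where pointwise boundary values and a good support theory are not available --- which is why the local version of the theorem calls for the new formulation developed in this paper.
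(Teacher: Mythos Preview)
The paper does not prove this theorem. Theorem~1.1 is stated in the introduction as Bogoliubov's classical result, with a citation, and the only proof-related remark is the sentence immediately following it: for $n=1$ with boundary values taken as continuous functions, the result is easily obtained from Morera's theorem. There is therefore no proof in the paper against which your proposal can be compared.

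On its own merits, your outline is a reasonable sketch of one standard route to the \emph{continuous}-boundary-value version and correctly isolates both the easy ingredient (the one-variable Morera argument along complex lines $\zeta\mapsto x^{0}+i\zeta\eta$, $\eta\in C$) and the substantive one (upgrading from holomorphy along such lines to joint holomorphy on an open set via the Kontinuit\"atssatz or a tube-theorem argument). Two caveats are worth flagging. First, as the paper explicitly notes, in Bogoliubov's formulation the limits in~(\ref{equal-bv}) are taken in the sense of Schwartz distributions, not continuous functions; your argument treats only the continuous case, and the distributional version requires a regularization step before anything like your outline applies. Second, your aside that local uniformity of the limits can be recovered ``from a normal-families argument, using that $F_1,F_2$ are locally bounded near $E$'' smuggles in a boundedness hypothesis that is not part of the statement and, in the distributional setting, need not hold---this is exactly the gap between the pointwise condition~(\ref{equal-bv}) and the continuity your gluing step needs.
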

Naturally, in (\ref{equal-bv}) it is important to specify in which
sense the boundary values are considered. In  Bogoliubov's version
these boundary values are taken in the sense of Schwartz
distributions. Note that for $n=1$ and when the boundary values
are taken in the sense of continuous functions, this result is
easily proven by using Morera's theorem.

Over the last fifty years, this result has been extended in
several directions:
\begin{itemize}
 \item the type of generalized functions for the boundary values in (\ref{equal-bv}), e. g., Schwartz distributions, Sato hyperfunctions, Fourier hyperfunctions, and ultradistributions;
 \item the number $m$ of wedges, $m>2$;
 \item the `topological nature' of the {edge} $E$, e. g., an open nonempty set or a maximal real submanifold.
\end{itemize}

We comment here on the case of Fourier hyperfunctions. A Fourier
hyperfunction $f$ has two realizations.  One is as a dual element
of the test-function space ${\mathcal O}
_{_{_{\hspace{-2.5mm}\sim}}}{} (\boldsymbol{D}^{n})$ and the other
is as a formal sum
$$      f(x) = \sum _{j=1}^{m} F_{j}(z)$$
where $F_{j}(z)$ is holomorphic in a wedge $W_{j} = \boldsymbol{D}^{n} + i\Gamma _{j}$, that is, an element of the relative {\v Cech }
cohomology group {$H^{n}(\mathfrak{W}, \mathfrak{W} \pr ; \tilde{{\mathcal O} } )$ which is isomorphic to} $H^{n}_{\boldsymbol{D}^{n}}(\boldsymbol{Q}^{n};
\tilde{{\mathcal O} } )$ {for a suitable relative covering $(\mathfrak{W}, \mathfrak{W} \pr )$ }of the pair $(\boldsymbol{Q}^{n},
\boldsymbol{Q}^{n}\backslash \boldsymbol{D}^{n})$, {where $\tilde{{\mathcal O} } $ is the sheaf} of slowly increasing holomorphic
functions on $\boldsymbol{Q}^{n} = \boldsymbol{D}^{n} + i
\mathbb{R}^n$ and $\boldsymbol{D}^{n} = \mathbb{R}^n \cup
S^{n-1}_{\infty }$ is the radial compactification of
$\mathbb{R}^n$ (see [\cite{Ka70, BN89}]). In the case of
hyperfunctions and Fourier hyperfunctions, the edge of the wedge
theorem tells us when the above sum is zero. Note that {$H^{n}(\mathfrak{W}, \mathfrak{W} \pr ; \tilde{{\mathcal O} } )$}
can be expressed as the following quotient space (see [\cite{Ka88}])
\begin{equation} \label{quotient}
   \oplus _{j=1}^{m} \tilde{{\mathcal O} } (W_{j})/[\oplus _{j<k}
\tilde{{\mathcal O} } (W_{j}+W_{k})].
\end{equation}
 This denominator appears in the general edge of the wedge theorem for Fourier hyperfunctions (see [\cite{NN93}]).
 \begin{rem}
Relation (\ref{quotient}) `contains' the most important versions
of the EOW: If $\displaystyle \sum _{j = 1}^{m}F_{j}(z) = f(x) =
0$, then there exist functions $H_{jk} \in \tilde{{\mathcal O} }
(W_{j} + W_{k})$ for $j < k$ such that
\begin{equation} \label{formalsum}
      F_{j}(z) = \sum _{k=1}^{m}H_{jk}(z), \  j = 1, \ldots ,
m,
\end{equation}
where we put
\begin{equation} \label{antisymmetric}
      H_{jk}(z) = -H_{kj}(z) \  {\rm for \, }\  j > k \  {\rm and
\, }\  H_{jj}(z) = 0.
\end{equation}
The above statement is Martineau's version of the EOW theorem
[\cite{Ma64, Ma67, Ma70}].

Note that conversely, for the functions $F_{j}$  defined by
(\ref{formalsum}) with functions $H_{jk}$ satisfying relation
(\ref{antisymmetric}), the sum
$$      \sum _{j = 1}^{m}F_{j}(z) = \sum _{j = 1}^{m} \sum _{k=1}^{m}H_{jk}(z)$$
is reduced to zero  and defines the zero Fourier hyperfunction.

When $m = 2$, we have Epstein's version of the EOW theorem
[\cite{Ep60}], i.e., if $F_{j} \in \tilde{{\mathcal O} } (W_{j})$
$(j = 1, 2)$ define the same Fourier hyperfunction, then $F_{j}
\in  \tilde{{\mathcal O} } (W_{j})$ $(j = 1, 2)$ are analytically
continued to $H_{12} \in \tilde{{\mathcal O} } (W_{1} + W_{2})$.
(Epstein's original version is formulated in terms of Schwartz
distributions as boundary values).

 If $\Gamma _{2} = -\Gamma _{1}$ and if the boundary values are taken in the sense of Schwartz distributions, then we have Bogoliubov's version of the EOW theorem.
 \end{rem}
In our recent investigations of relativistic quantum field theory
with a fundamental length (see [\cite{BN04,BN08,NB08a,NB09}]) we
need a version of the edge of the wedge theorem for tempered
ultra-hyperfunctions and it is this version which is treated in
this article.

\section{Preliminaries}
\subsection{Global and local versions of the EOW Theorem }
In these preliminary considerations we put the global and the
local forms of the edge of the wedge theorems for hyperfunctions
and ultra-hyperfunctions into the perspective of cohomology
theory.

Let us recall the global form of this theorem for hyperfunctions
(For simplicity and our intended application to quantum field
theory, we consider in this paper only Bogoliubov's version for
$\Gamma = V_+$):

 Let $V_{+}\subset \R^4$ denote the forward light-cone; suppose
that $F_{1}$ is an analytic function in $T(V_{+}) = \mathbb R^{4}
+ iV_{+}$ and $F_{2}$ an analytic function in $T(-V_{+})$.  Then
the two functions $F_{i}$ $(i = 1, 2)$ define hyperfunctions
$f_{i}$ on $\R^4$.  If $f_{1} = f_{2}$ , then $F_{i}$ are
analytically continued to an entire function $F$.

The local form of the EOW theorem for hyperfunction can be
formulated as follows. Let $U$ be an open set in $\mathbb R^{n}$
and $V$ an open set in $\mathbb C^{n}$ such that $U = V \cap
\mathbb R^{n}$.  Then we have the canonical restriction map
\begin{equation} \label{restriction}
     H^{n}_{ \mathbb R^{n}}(\mathbb C^{n}, {\mathcal O} )
\rightarrow  H^{n}_{U}(V, {\mathcal O} ) = H^{n-1}(V \backslash U,
{\mathcal O} ),
\end{equation}
where $\mathcal{O}$ is the sheaf of holomorphic functions on
$\mathbb{C}^n$.  $H^{n}_{U}(V, {\mathcal O} )$ is independent of
the complex neighborhood $V$ of $U$ by the excision theorem, and
the presheaf $\{ U \rightarrow H^{n}_{U}(V, {\mathcal O} )\} $ is
the sheaf of hyperfunctions on $\mathbb{R}^n$ which is often
denoted by $\mathcal{B}$. The local form of the EOW theorem now
reads (we use the notation from above):

If $f_{1} =f_{2}$ in $U$ (or the restrictions $f_{j\vert U}$ $(j =
1, 2)$ coincide or the support of $f_{1} - f_{2}$ is contained in
the complement of $U$), then $F_{j}$ are analytically continued to
each other through $U$.

There are two ways to treat the EOW theorem; the functional method
(see [\cite{Hoe83}]) and the cohomological method (see
[\cite{Mo73, Ka88}]).  The functional method uses the notion of
the analytic wave front set $WF_{a}(f)$ of hyperfunctions $f$ on
$\mathbb R^{n} \times  (\mathbb R^{n}\backslash \{ 0\} )$ and the
decomposition of $WF_{a}(f)$.  The colomological method uses the
notion of the flabby sheaf ${\mathcal C} $ of micro functions on
$\mathbb R^{n} \times  S^{n-1}$ and the exact sequence
$$      0 \rightarrow  {\mathcal A}  \rightarrow  {\mathcal B}  \rightarrow  \pi _{*} {\mathcal C}  \rightarrow  0,$$
where ${\mathcal A} $ is the sheaf of real analytic functions on
$\mathbb R^{n}$ and $\pi _{*} {\mathcal C} $ is the direct image
of ${\mathcal C} $ under the projection $\pi : \mathbb R^{n}
\times S^{n-1} \rightarrow \mathbb R^{n}$, i.e., the sheaf on
$\mathbb R^{n}$ defined by the correspondence
$$      \mathbb R^{n} \supset  U \rightarrow  {\mathcal C} (\pi ^{-1}(U)).$$
Finally we comment on the  difficulties for the EOW for tempered
ultra-hyperfunctions.

{Any element of ${\mathcal T} (T(\mathbb {R}^{n}))^{\prime } $ belongs to some ${\mathcal T} (T(K))^{\prime }$, $K = [-k, k]^{n}$ for suitable $k > 0$ and
\begin{equation} \label{t-uhf}
{\mathcal T} (T(K))^{\prime } = H^{n}(\mathfrak{W}, \mathfrak{W} \pr ; {\mathcal O}_0 ),
\end{equation}
where $\mathfrak{W} = \mathfrak{W} \pr \cup \{\mathbb {C}^{n} \}$ and $\mathfrak{W} \pr = \{T(E_j); j = 1, \ldots , n \}$, $E_j = \{y \in \mathbb R^{n}; \vert y_j \vert > k \}$
 are relative covering of $(\mathbb {C}^{n}, \mathbb {C}^{n} \setminus T(K)$}
(see [\cite{Ha61, Mo75b}]).  Hyperfunctions are localized in a relatively
open set $U$ of the closed set $\mathbb {R}^{n}$ in $\mathbb {C}^{n}$
by the formula (\ref{restriction}).  On the  other hand,  tempered
ultra-hyperfunctions may be localized in a relatively open set $U$
of the closed set $T(K)$ of $\mathbb C^{n}$, but not in a
relatively open set $U$ of the closed set $\mathbb R^{n}$ in
$\mathbb C^{n}$.  This is the reason why  tempered
ultra-hyperfunctions have no (standard) localization property in
$\mathbb{R}^n$. In [\cite{BN04, NB08a, BN08}] this property
 has been successfully applied to axiomatic
quantum field theory in order to formulate such a theory with a
fundamental length.

The global form of the edge of the wedge theorem for tempered
ultra-hyperfunctions reads: {Let $\Gamma = \{y \in \mathbb{R}^n ; y_j > k, j = 1, \ldots , n\}$.}
If $F_{1}$ is a polynomially increasing holomorphic function in
$T(\Gamma )$ and $F_{2}$ a polynomially increasing holomorphic
function in $T(-\Gamma )$, then these functions $F_{i}$ $(i = 1,
2)$ define tempered ultra-hyperfunctions $f_{i}$ and if $f_{1} =
f_{2}$, then $F_{i}$ are analytically continued to a
{polynomial $F$ (see [\cite{Mo75b}])}.

\begin{rem}
It is interesting to note that even though $T(\Gamma )$ and
$T(-\Gamma )$ are separated by a gap of size $2\sqrt{n} k$ the
functions $F_{1}$ and $F_{2}$ are analytically continued to each
other!
\end{rem}
At first sight this seems to be quite surprising. However, from
 the  point of view of the cohomological definition of  ultra-hyperfunctions, this result is not so surprising, since
{$H^{n}(\mathfrak{W}, \mathfrak{W} \pr ; {\mathcal O}_0 )$ has the following representation
$$
\mathcal{O}_0 (\mathbb{C}^n \# T(K))/ \sum _{j=1}^{n} \mathcal{O}_0 (W_{\hat{j}}),
$$
where
$$
   \mathbb{C}^n \# T(K)) = T(E_1) \cap \cdots \cap T(E_n),
$$
$$
    W_{\hat{j}} = T(E_1) \cap \cdots \cap \widehat{T(E_j)} \cap \cdots \cap T(E_n)
$$}and the
denominator of this representation then shows this result. We
explain this in more detail in the next subsection for the one
dimensional case. The cohomological treatment of
ultra-hyperfunctions is given in [\cite{Mo69, Mo72}] and the above
representation was presented in {[\cite{Mo75b}]}.  The global form of
the EOW theorem for tempered ultra-hyperfunctions has  been
shown in [\cite{Su04}] and some preliminary version in
[\cite{Fr06, Fr07}].

In the functional method [\cite{Hoe83}], H\"ormander used a kernel
$K(z)$ defined by \beq \label{hkernel}
     K(z) = (2\pi )^{-n}
\int e^{i\langle z, \xi \rangle }/I(\xi ) d\xi , \ I(\xi ) = \int
_{\vert \omega \vert =1} e^{-\langle \omega , \xi \rangle }
d\omega \eeq to prove the edge of the wedge theorem.  For the
proof of the local form of the EOW theorem for tempered
ultra-hyperfunctions we also use the functional method with some
modification $K_{r}(z) = r^{-n}K(z/r)$ of this kernel $K(z)$ for
$r
> 0$.

 The local form of the edge of the wedge theorem for hyperfunction has the following formulation ($F_j$ and $f_j$ are related as in the above results):
 If $f_{1} =f_{2}$ in an open set $O$, then $F_{1}$ and $F_{2}$ are
analytically continued to each other through $O$.

 Since tempered ultra-hyperfunctions have no localization property, it is not easy to formulate a local form of the edge of the wedge theorem.
In this paper, we suggest a formulation of the local version of
the edge of the wedge theorem by using the notion of a carrier.
\begin{rem}
For the one dimensional case, the Cauchy-Hilbert transformation
(\ref{chtransf}) gives the isomorphism of the space ${\mathcal O}
(L)^{\prime }$ of analytic functionals with carriers in $L$ onto
the relative cohomology group of covering (\ref{cohomology-u}) of
the pair $(\mathbb C, \mathbb C\backslash L)$.  But for the multi
dimensional case, we need the additional assumption to have the
expression of the space ${\mathcal O} (L)^{\prime }$ of analytic
functionals with carriers in a compact set $L \subset  \mathbb
C^{n}$ as a collection of holomorphic functions.  In fact, the
isomorphism of the space ${\mathcal O} (L)^{\prime }$ of analytic
functionals with carriers in a compact set $L \subset \mathbb{
C}^{n}$ onto the relative cohomology group $H^{n}_{L}(\mathbb
{C}^{n}, {\mathcal O} )$ is proven under the condition a) $H^{p}(L,
{\mathcal O} ) = 0$ for $p = 1, 2, \ldots $ (see [\cite{Mo75b}]), and the isomorphism
of $H^{n}_{L}(\mathbb {C}^{n}, {\mathcal O} )$ onto the cohomology
group $H^{n}(\mathfrak{W}, \mathfrak{W}^{\prime }; {\mathcal O} )$ of a
covering $(\mathfrak{W}, \mathfrak{W}^{\prime })$ of the pair $(\mathbb {C}^{n}, \mathbb
{C}^{n}\setminus L)$ is proven under the condition b)
$H^{p}(W_{\lambda _{0}}\cap  \ldots \cap W_{\lambda _{m}}) = 0$
for $p \geq  1$ and $W_{\lambda _{0}}, \ldots , W_{\lambda _{m}}
\in  \mathfrak{W}$ {(Leray theorem, see [\cite{Mo93}])}.   In the case of hyperfunctions, a compact set $L \subset
\mathbb {R}^{n}$ satisfies  condition a) and there exists a relative
covering $(\mathfrak{W}, \mathfrak{W}^{\prime })$ of the pair $(\mathbb {C}^{n}, \mathbb
{C}^{n}\setminus L)$ which satisfies
 condition b) (see [\cite{Ka88, Mo93}]).  But in the case of
analytic functional ${\mathcal O} (L)^{\prime }$, conditions a)
and b) may not necessarily be satisfied for some compact set $L
\subset \mathbb {C}^{n}$ and any relative covering $(\mathfrak{W}, \mathfrak{W}^{\prime
})$ of the pair $(\mathbb {C}^{n}, \mathbb {C}^{n}\backslash L)$.
Therefore, in this paper we employ the functional method.
\end{rem}

The main result in this regard is Corollary \ref{application}
which has an intimate connection to axiomatic quantum field theory
with a fundamental length (see [\cite{NB09}]).  This corollary
says that if $f_{1} - f_{2} \in {\mathcal T} (L)^{\prime }$ for
some $\ell $-neighborhood
$$      L = \{ w \in  \mathbb C^{4}; \exists\, x \in  V \  \vert {\rm Re \, }w - x\vert  + \vert {\rm Im \, }w\vert _{1} < \ell \} ,$$
of the light-cone $V$, $\ell >0$, then $F_{i}$ $(i = 1, 2)$ are
analytically continued to each other through a set
$$      \{ x \in  \mathbb R^{4}; {\rm dist \, }(x, V) > (\sqrt{2} + 1)\ell \} .$$

\subsection{The one dimensional case}
In order to explain the basic idea of our strategy of proof for
the local version of the EOW theorem for tempered
ultrahyperfunctions, we illustrate it here for the technically
much simpler case of one dimension. And we prepare this with
explaining the proof for hyperfunctions in one variable.

The space of hyperfunctions of one variable is the quotient space
$$      {\mathcal B} (\mathbb R) = {\mathcal O} (\mathbb C\backslash  \mathbb R)/{\mathcal O} (\mathbb C).$$
Let $F_{1}$ (resp. $F_{2}$) be a holomorphic function in the upper
(resp. lower) half plane.  Then the  pair of functions $(F_{1},
F_{2}) \in  {\mathcal O} (\mathbb C\backslash \mathbb R)$ defines
an element $f (= F_{1} - F_{2})$ of ${\mathcal B} (\mathbb R)$. If
$f = 0$ ($F_{1} = F_{2}$), then $(F_{1}, F_{2}) \in  {\mathcal O}
(\mathbb C)$.  This shows that $F_{1},F_2$  coincide with an
element $F$ of ${\mathcal O} (\mathbb C)$.  Thus the EOW theorem
automatically follows from the cohomological definition of
hyperfunctions.

The local version of the EOW theorem for hyperfunctions is also a
direct consequence of the cohomological definition of
hyperfunctions on an open set $U$ of $\mathbb R$:
$$ {\mathcal B}(U) = {\mathcal O} (V\backslash U)/{\mathcal O} (V),$$
where $V$ is a complex neighborhood of $U$ such that $U = V \cap
\mathbb R$.

Now consider the case of tempered ultra-hyperfunctions. Denote $K
= [-k, k]$ for $k > 0$ and $T(K) = \{ z \in  \mathbb C; \vert {\rm
Im \, }z\vert  \leq  k\} $. Note that  formula (\ref{t-uhf})
shows that any tempered ultra-hyperfunction $f \in {\mathcal T}
(T(\mathbb R))^{\prime }$ can be expressed as an element of the
space
$$      H^{1}_{T(K)}(\mathbb C, {\mathcal O} _{0}) \cong {\mathcal O} _{0}(\mathbb C\backslash T(K))/{\mathcal O} _{0}(\mathbb C)$$
for some $k > 0$.  Let $F_{1}$ (resp. $F_{2}$) be a polynomially
increasing holomorphic function in $\{ z \in \mathbb C; {\rm Im \,
}z
> k\} $ (resp. $\{ z \in \mathbb C; -{\rm Im \, }z > k\} $).  Then
the  pair of functions $(F_{1}, F_{2}) \in  {\mathcal O}
_{0}(\mathbb C\backslash T(K))$ defines an element $f$ of
$H^{1}_{T(K)}(\mathbb C, {\mathcal O} _{0})$. If $f = 0$, then
$(F_{1}, F_{2}) \in {\mathcal O} _{0}(\mathbb C)$.  This shows
that $F_{1},F_2$ coincide with an element $F$ of ${\mathcal O}
_{0}(\mathbb C)$. Thus the global form of the EOW theorem for
tempered ultra-hyperfunctions automatically follows from the
cohomological definition of tempered ultra-hyperfunctions.

Since the notion of localization for ultra-hyperfunctions is not
available in the above sense, there is no literature about the
local version of the EOW theorem for ultra-hyperfunctions. The
notion of localization for generalized functions has an intimate
connection with the notion of support. However,
ultra-hyperfunctions have no supports in general, but they are a
special kind of analytic functionals and have carriers.
\begin{defn}  Let $L$ be a compact set in $\mathbb{C}$.  $L$ is
called a {\bf carrier of an analytic functional} $f$ (a continuous
linear functional on the space $\mathcal{O}(\mathbb{C})$ of entire
functions), if $f$ satisfies
$$  \vert f(\phi ) \vert \leq C_V \sup _{z \in V} \vert \phi (z)
\vert $$ for any open neighborhood $V$ of $L$.
\end{defn}

 Let $K = [a, b] \subset  \mathbb R$ and $H^{1}_{K}(\mathbb C,
{\mathcal O} )$ be the space of hyperfunctions with supports in
$K$, which is isomorphic to the space of analytic functionals with
carriers in $K$:
$$      H^{1}_{K}(\mathbb C, {\mathcal O} ) \cong {\mathcal O} (\mathbb C\backslash K)/{\mathcal O} (\mathbb C).$$
Every $F \in  {\mathcal O} (\mathbb C\backslash K)$ defines a
functional on the space of functions $\phi$ which are holomorphic
in a complex neighborhood $V$ of $K$ by the formula
\begin{equation} \label{functional}
  \phi  \rightarrow  f(\phi ) = -\int _{C} F(z)\phi (z) dz,
\end{equation}
where $C$ is a closed path that encircles $K$ once in the positive
direction.

Let $E_{\xi }^{t}(z) = (4\pi t)^{-1/2}e^{-(z - \xi )^{2}/4t}$ for
$\xi  \in  \mathbb R\backslash K$. Clearly $E_{\xi }^{t}$ is
holomorphic in a complex neighborhood $V$ of $K$ and, as $t
\rightarrow 0$, $E_{\xi }^{t} \rightarrow  \delta (x - \xi )$ and
one has the estimate
\begin{equation} \label{delta}
 \vert f(E_{\xi }^{t})\vert  \leq  M_{C} \sup _{x+iy \in C} (4\pi
t)^{-1/2} e^{(y^{2}-(x-\xi )^{2})/4t} \rightarrow  0
\end{equation}
as $t \rightarrow  0+$, since $C$ can be chosen arbitrary close to
$K$.

Let $L = [a, b]+i[-\ell , \ell ]$ and $f$ be an analytic
functional with carrier $L$.  Then the Cauchy-Hilbert
transformation $F$ of $f$ is defined by \beq \label{chtransf}
 F(z) = (2\pi i)^{-1} f_{w}(1/(w - z));
\eeq it is an element of \beq \label{cohomology-u}
   H^{1}_{L}(\mathbb C, {\mathcal O}_0 )
\cong {\mathcal O}_0 (\mathbb C\backslash L)/{\mathcal O}_0
(\mathbb C), \eeq i.e., the space of tempered ultra-hyperfunctions
with carriers in $L$, and it reproduces the functional $f$ by
formula (\ref{functional}) with a closed path $C$ that encircles
$L$, i.e.,
$$     -\int _{C} F(z)\phi (z) dz =  f_{w} \left( (2\pi i)^{-1} \int _{C}  \phi (z)/(z - w) dz \right) = f(\phi ). $$
For a tempered ultra-hyperfunction $f$ with carrier $L$, we have
(\ref{delta}) if $\xi \in \mathbb R\backslash [a-\ell , b+\ell ]$.
This means that $\xi $ is considered to be outside of $[a, b]$
only if $\xi  \in \mathbb R\backslash [a-\ell , b+\ell ]$!
\begin{rem} Heuristically we read this fact as follows:\\
An ultra-hyperfunction $f$ becomes `aware' of $\xi $ being outside
of $[a, b]$ only if $\xi \in \mathbb R\backslash [a-\ell , b+\ell
]$, while a hyperfunction can be `aware' of $\xi $ being outside
of $[a, b]$ if $\xi \in \mathbb R\backslash [a , b]$.

This difference can also be understood  through cohomological
considerations.  Let $U$ be an open set in $\mathbb{R}$ such that
$[a, b] \cap U = \emptyset $ and $U = V \cap \mathbb{R}$ for an
open set $V$ in $\mathbb{C}$. In the case of hyperfunctions, we
have the restriction
$$     H^{1}_{K}(\mathbb C, {\mathcal O} ) \rightarrow  H^{1}_{U}(V, {\mathcal O} ) \cong {\mathcal O} (V \backslash K)/{\mathcal
O} (V) = {\mathcal O} (V)/{\mathcal O} (V) = 0.$$  But for the
case of ultra-hyperfunctions, we have
$$     H^{1}_{L}(\mathbb C, {\mathcal O}_0 ) \rightarrow H^{1}_{U}(V, {\mathcal O}_0 ) \cong {\mathcal O}_0 (V \backslash L)/{\mathcal O}_0 (V),$$
and there exists an open set $V$ in $\mathbb{C}$ satisfying $[a,
b] \cap U = \emptyset $ and $U = V \cap \mathbb{R}$ but $V
\backslash L
 \neq V$ and ${\mathcal O}_0 (V \backslash L)/{\mathcal O}_0 (V) \neq
 0$.
\end{rem}

Now we study the EOW theorem by using  H\"ormander's kernel
$K(z)$. First, recall that Dirac's $\delta $ function can be
expressed as follows:
$$      \delta (x) = (2\pi )^{-1} \int ^{\infty }_{-\infty }e^{i\xi x} d\xi $$
$$      = (2\pi )^{-1} \lim _{\epsilon \rightarrow 0} \int ^{\infty }_{-\infty }e^{i\xi (x+i\epsilon )}\frac{e^{\xi }}{e^{\xi }+e^{-\xi }} d\xi + (2\pi )^{-1} \lim _{\epsilon \rightarrow 0} \int ^{\infty
}_{-\infty }e^{i\xi (x-i\epsilon )}\frac{e^{-\xi }}{e^{\xi
}+e^{-\xi }} d\xi $$
$$      = (2\pi )^{-1} \lim _{\epsilon \rightarrow 0} \int ^{\infty }_{-\infty }\frac{e^{i\xi (x+i\epsilon -i)}}{e^{\xi }+e^{-\xi }} d\xi + (2\pi )^{-1} \lim _{\epsilon \rightarrow 0} \int ^{\infty
}_{-\infty }\frac{e^{i\xi (x-i\epsilon +i)}}{e^{\xi }+e^{-\xi }}
d\xi $$
$$      = (2\pi )^{-1} \lim _{\epsilon \rightarrow 0} \sum _{\omega =\pm 1} \int ^{\infty }_{-\infty }\frac{e^{i\xi (x+i(1-\epsilon )\omega )}}{e^{\xi }+e^{-\xi }}d\xi .$$
Now introduce the function $K(z)$ of (\ref{hkernel}) for $n = 1$:
$$      K(z) = (2\pi )^{-1} \int ^{\infty }_{-\infty } e^{i\xi z} \frac{1}{2}{\rm {}sech \, }\xi  d\xi
     = \frac{1}{4} {\rm {}sech \, }(\pi z/2)$$
for $\displaystyle {\rm {}sech \, }\xi  = \frac{2}{e^{\xi
}+e^{-\xi }}$.  Then the above representation of the Dirac's delta
function can be rewritten as
$$      \delta (x) = \lim _{\epsilon \rightarrow +0}[K(x+i\epsilon -i)+K(x-i\epsilon +i)]$$
$$      = \lim _{\epsilon \rightarrow +0}(1/4)[{\rm {}sech \, }\pi (x+i\epsilon -i)/2 + {\rm {}sech \, }\pi (x-i\epsilon +i)/2]$$
\beq \label{d-delta}
    = \lim _{\epsilon \rightarrow +0}(1/4)i[{\rm {}cosech \, }\pi
(x+i\epsilon )/2 - {\rm {}cosech \, }\pi (x-i\epsilon )/2], \eeq
where $\displaystyle {\rm {}cosech \, }\xi  = \frac{2}{e^{\xi
}-e^{-\xi }}$.  Since the difference between $(1/4)i{\rm cosech \,
}\pi z/2$ and $-1/(2\pi i z)$ is holomorphic function in $\{z \in
\mathbb{C}; \vert {\rm Im \, } z \vert < 1\}$, formula
(\ref{d-delta}) is equivalent to the famous formula
$$     \delta (x) = -(2\pi i)^{-1}\lim_{\epsilon \rightarrow +0} [1/(x+i\epsilon ) - 1/(x-i\epsilon )].$$
The singular points of ${\rm {}sech \, }\xi $ are $\xi  = i(1 +
2n)\pi /2$ $n = 0, \pm 1, \pm 2, \ldots $ and those of $K(z)$ are
$z = i(1 + 2n)$ $n = 0, \pm 1, \pm 2, \ldots $. Therefore we have,
for $\phi \in {\mathcal T} (T(\mathbb R))$ and $0 < R \leq 1$
\begin{align*}
\phi(0)& = \lim_{\epsilon \rightarrow +0}\int [K(x+i\epsilon
-i)+K(x-i\epsilon +i)]
\phi(x) dx\\
& = \lim_{\epsilon \rightarrow +0}\sum_{\omega=\pm 1}\int K(x-i\omega \epsilon +i\omega)\phi(x)dx\\
& = \lim_{\epsilon \rightarrow +0}\sum_{\omega=\pm 1}\int K(x-iR\omega -i\omega \epsilon +i\omega)\phi(x-iR\omega)dx\\
& = \sum_{\omega=\pm 1}\int K(x-i\omega R +i\omega)\phi(x-iR
\omega)dx
\end{align*}
Let $K_{r}(z) = r^{-1}K(z/r)$. Then we can reformulate the above
relation as
$$      \delta (x) = \lim_{\epsilon \rightarrow +0}[K_{r}(x+ir\epsilon -ir)+K_{r}(x-ir\epsilon +ir)]$$
and for $1 < R \leq  r$
$$      \sum _{\omega =\pm 1} \int _{-\infty }^{\infty } K_{r}(x+i(r-R)\omega ) \phi (x - iR\omega )dx = \phi (0)$$
and
$$      \sum _{\omega =\pm 1} \int _{-\infty }^{\infty } K_{r}(x - t +i(r-R)\omega ) \phi (x - iR\omega )dx$$
\begin{equation} \label{reproduce}
 = \sum _{\omega =\pm 1} \int _{-\infty }^{\infty } K_{r}(x
+i(r-R)\omega ) \phi (x + t - iR\omega )dx = \phi (t).
\end{equation}
Let $\Gamma _{1} = \{ y \in  \mathbb R; y > \ell \} $ and $\Gamma
_{2} = \{ y \in  \mathbb R; -y > \ell \} $. Given $F_{j} \in
{\mathcal O} _{0}(T(\Gamma _{j}))$ denote by  $u_{j}$ the tempered
ultra-hyperfunction defined by
$$      u_{j}(\phi ) = \int _{-\infty }^{\infty }F_{j}(x + iy)\phi (x + iy) dx$$
for $y \in  \Gamma _{j}$.  Choose $r > \ell $ and define
$$      U_{j}(z) = u_{j}*K_{r}(z) = \int _{-\infty }^{\infty }F_{j}(\xi  + i\eta )K_{r}(z - \xi  -i\eta ) d\xi $$
for $\eta  \in  \Gamma _{j}$.  Then $U_{j}(z)$ is analytic in
$V_j$,
\begin{align*}
  V_{1} &= \cup _{\eta \in \Gamma _{1}}\{ z \in  \mathbb C; \vert {\rm Im \, }(z - i\eta )\vert  < r\}  = \{ z \in  \mathbb C; {\rm Im \, }z > \ell  - r\} ,\\
 V_{2} &= \cup _{\eta \in \Gamma _{2}}\{ z \in  \mathbb C; \vert {\rm Im \, }(z - i\eta )\vert  < r\}  = \{ z \in  \mathbb C; {\rm Im \, }z < -\ell  + r\} .
\end{align*}
Note that (\ref{reproduce}) implies
$$      \sum _{\omega  = \pm 1} \int _{-\infty }^{\infty } U_j (x + i(r-R)\omega )\phi (x - iR\omega )dx$$
$$      = \sum _{\omega  = \pm 1} \int _{-\infty }^{\infty }u_j *K_{r}(x +i(r-R)\omega )\phi (x - iR\omega )dx$$
$$      = \sum _{\omega  = \pm 1} \int _{-\infty }^{\infty } \int _{-\infty }^{\infty } F_{j}(\xi  + i\eta )K_{r}(x - \xi - i\eta +i(r-R)\omega )\phi (x - iR\omega )dx d\xi $$
\beq \label{ukernel1}
     = \int _{-\infty }^{\infty }F_{j}(\xi +
i\eta )\phi (\xi + i\eta ) d\xi = u_j (\phi ). \eeq If $u_{1} =
u_{2}$, then $U_{1}(z) = U_{2}(z)$ in $V_{1} \cap V_{2} = \{ z \in
\mathbb C; \vert {\rm Im\, } z\vert < r - \ell \} $, and the two
functions $U_{1},U_2$ are continued to a function $U$ which is
analytic in $V_{1} \cup V_{2} = \mathbb{C}$.

Now introduce the function
$$  H(z) = \sum _{\omega =\pm 1} U(z + ir\omega );$$
clearly $H$ is an entire function, and we have for $y \in  \Gamma
_{1}$, $\phi  \in  {\mathcal T} (T(\mathbb R))$, and $\ell - r
\leq  \sigma  <  \ell $
$$      \int _{-\infty }^{\infty }H(x + iy)\phi (x + iy) dx = \sum _{\omega =\pm 1} \int _{-\infty }^{\infty }U(x + iy + ir\omega )\phi (x + iy) dx$$
$$      = \sum _{\omega =\pm 1} \int _{-\infty }^{\infty }U(x + i(r - \ell +\sigma )\omega )\phi (x - i(\ell - \sigma )\omega ) dx$$
$$      = \sum _{\omega =\pm 1} \int _{-\infty }^{\infty }U_{1}(x + i(r - \ell +\sigma )\omega )\phi (x - i(\ell - \sigma )\omega ) dx$$
$$      = u_{1}(\phi ) = \int _{-\infty }^{\infty }F_{1}(x + iy)\phi (x + iy) dx$$
 where we used the fact that
$U(z) = U_{1}(z)$ for ${\rm Im\, } z > \ell  - r$ and relation
(\ref{ukernel1}), hence
$$      \int _{-\infty }^{\infty }H(x + iy)\phi (x + iy) dx = \int _{-\infty }^{\infty }F_{1}(x + iy)\phi (x + iy) dx$$
for $\phi  \in  {\mathcal T} (T(\mathbb R))$.  This shows that
$H(z) = F_{1}(z)$ in $T(\Gamma _{1})$, and in the same way we get
$H(z) = F_{2}(z)$ in $T(\Gamma _{2})$.  This proves the global
form of the EOW theorem for tempered ultra-hyperfunctions of one
variable.

 Next we discuss the local version. To this end assume that the carrier of $u_{1} - u_{2}$ is contained in $L =[a, b] + i[-\ell , \ell ]$, instead of $u_{1} = u_{2}$. This will lead to the local form of the EOW theorem.

Introduce the function $U_{12}$ by
$$U_{12}(z) = (u_{1} - u_{2})*K_{r}(z).$$
Since the singular points of $K_{r}(z)$ are $z = i(1 + 2n)r$, $n =
0, \pm 1, \pm 2, \ldots $, $K_{r}(z - w)$, for ${\rm Re \, }z
\not\in [a, b]$, is holomorphic in a neighborhood of $L$ and
$U_{12}(z)$ is holomorphic in $Z = \{ z \in  \mathbb C; {\rm Re \,
}z \not\in [a, b]\} $.  Since $u_{1} = u_{2} + u_{1} - u_{2}$,
$$      U_{1}(z) = U_{2}(z) + U_{12}(z).$$
$U_{1}(z)$ is holomorphic in $V_{1}$ and $U_{2}(z) + U_{12}(z)$ is
holomorphic in $Z\cap V_{2}$.  Therefore, in the same way as in
the global case, $U_{1}(z)$ can be analytically continued to a
function $U_{1}^{\prime }(z)$ which is holomorphic in $Z$ and
$F_{1}(z)$ can be analytically continued to $H_{1}(z) = \sum
_{\omega =\pm 1} U_{1}^{\prime }(z + ir\omega )$ which is
holomorphic in $Z$.
 In the same way, $F_{2}(z)$ can
be analytically continued to $H_{2}(z)$ which is holomorphic in
$Z$. In order to show $H_{1}(z) = H_{2}(z)$ in $Z$, we introduce a
path $C = C_{1} + \ldots + C_{5}$ consisting of line segments
$C_{1} = (-\infty , a - 2\ell ]$, $C_{2} = [a - 2\ell , a+i2\ell
]$, $C_{3} = [a+i2\ell , b+i2\ell ]$, $C_{4} = [b+i2\ell , b +
2\ell ]$ and $C_{5} = [b+2\ell , \infty )$ (see Figure 1).
\begin{figure}[htbp]
  \begin{center}
   \includegraphics[width=0.8\textwidth]{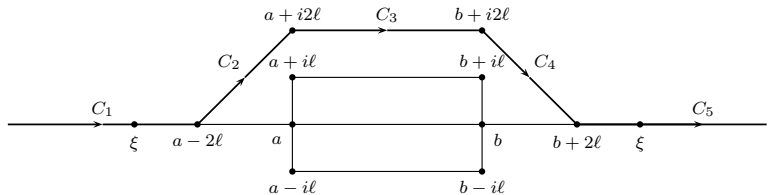}
    \caption{The path of integration $C=C_1 + C_2 +C_3 +C_4+C_5$}
    \label{fig:path1}
  \end{center}
\end{figure}
\noindent Let $E_{\xi }^{t}(z) = (4\pi t)^{-1/2} e^{-(\xi  -
z)^{2}/4t}$ and $\xi  \in  \mathbb R$ such that $\xi  < a-2\ell $
or $\xi
> b + 2\ell $.  Then we have
$$ u_{1}(E_{\xi }^{t}) = \int_{-\infty }^{\infty }F_{1}(x + i2\ell )E_{\xi }^{t}(x+i2\ell ) dx$$
$$ = \int_{-\infty }^{\infty }H_{1}(x +i2\ell )E_{\xi }^{t}(x+i2\ell ) dx = \int _{C}H_{1}(z)E_{\xi }^{t}(z) dz \rightarrow  H_{1}(\xi )$$
as $t \rightarrow  0+$.  Here we used the fact that if $\vert \xi
- {\rm Re \, }z\vert  > \vert {\rm Im \, }z\vert $, $\vert
E^{t}_{\xi }(z)\vert  = (4\pi t)^{-1}e^{-(\vert \xi  - {\rm Re \,
}z\vert ^{2} - \vert {\rm Im \, }z\vert ^{2})/4t} \rightarrow  0$
as $t \rightarrow  0+$ and if $z \in  \mathbb R $, $E^{t}_{\xi
}(z) \rightarrow  \delta (z - \xi )$, i.e., $E_{\xi }^{t}(z)
\rightarrow  0$ for $z \in  C_{2} + C_{3} + C_{4}$, and if $\xi
\in  C_{1}$ (resp. $\xi  \in  C_{5}$) then $E_{\xi }^{t}(z)
\rightarrow  \delta (z-\xi )$ and $E_{\xi }^{t}(z) \rightarrow  0$
for $z \in  C_{5}$ (resp. $z \in C_{1}$).  If we choose a curve
$C^{\prime } = C_{1}^{\prime } + \ldots + C_{5}^{\prime }$, where
$C_{1}^{\prime } = (-\infty , a - 2\ell ]$, $C_{2}^{\prime } = [a
- 2\ell , a-i2\ell ]$, $C_{3}^{\prime } = [a-i2\ell , b-i2\ell ]$,
$C_{4}^{\prime } = [b-i2\ell , b + 2\ell ]$, $C_{5}^{\prime } =
[b+2\ell , \infty )$, then we have
$$ u_{2}(E_{\xi }^{t}) = \int _{-\infty }^{\infty }H_{2}(x -i2\ell )E_{\xi }^{t}(x-i2\ell ) dx = \int _{C^{\prime }}H_{2}(z)E_{\xi }^{t}(z) dz \rightarrow  H_{2}(\xi )$$
as $t \rightarrow  0+$.  Moreover, we find
$$\vert u_{1}(E_{\xi }^{t}) - u_{2}(E_{\xi }^{t})\vert  \leq  M_{V} \sup_{z\in V}\vert E_{\xi }^{t}(z)\vert  \rightarrow  0$$
as $t \rightarrow  0+$ for some neighborhood $V$ of $L$.  Thus
 $H_{1}(\xi ) = H_{2}(\xi )$ follows and consequently there exists a
function $H$ which is holomorphic in $\mathbb C\backslash L$ and
which is the common extension of $F_{1}$ and $F_{2}$.  Therefore
$F_{1}$ and $F_{2}$ have the common extension $H \in {\mathcal O}
_{0}(\mathbb C\backslash L)$, and $H$ is an element of
$H^{1}_{L}(\mathbb C, {\mathcal O} _{0})$ of (\ref{cohomology-u}).

It is interesting to note that the cohomology group
$H^{1}_{L}(\mathbb C, {\mathcal O} _{0})$ of tempered
ultra-hyperfunctions with carriers in $L$ appears again here in
the argument of EOW theorem using the kernel $K_r(z)$.

\section{Global edge of the wedge theorem}
In this section we prove the global version of EOW theorem {in higher dimension} using
the functional method. This will help in understanding the proof
of the local version.

First we recall the test-function space ${\mathcal T} (T(\mathbb
R^{n}))$ of tempered ultra-hyperfunction. Let $T(A) = \mathbb
R^{n} + iA$ for $A \subset  \mathbb R^{n}$. Let $K \subset \mathbb
R^{n}$ be a convex compact set and ${\mathcal T} _{b}(T(K))$ the
set of continuous functions on $T(K)$ which are holomorphic in the
interior of $T(K)$ and satisfy
$$      \Vert f\Vert ^{T(K), j} = \sup \{ \vert z^{p}f(z)\vert ; z \in  T(K), \vert p\vert  \leq  j\}  < \infty , \  j = 1,2, \ldots\; .$$
There is a natural restriction mapping $P_{K,L}: {\mathcal T}
_{b}(T(K)) \rightarrow  {\mathcal T} _{b}(T(L))$ for $K \supset
L$.
\begin{defn} ${\mathcal T} (T(\mathbb R^{n}))$ is the projective
limit
$$      {\mathcal T} (T(\mathbb R^{n})) = \lim _{\leftarrow } {\mathcal T} _{b}(T(K)), \  K \uparrow  \mathbb R^{n}$$
of the projective system $({\mathcal T} _{b}(T(K)), P_{K,L})$,
where $K$ runs through the convex compact sets in $\mathbb R^{n}$.
\end{defn}
It is known that the kernel $K(z)$ of (\ref{hkernel}) is a rapidly
decreasing holomorphic function in $\{ z\in \mathbb C^{n}; \vert
{\rm Im \, }z\vert ^{2} < 1 + \vert {\rm Re \, }z\vert ^{2}\} $
(see [\cite{NN93}]), and  the following lemma holds (see
[\cite{NN01}]).
\begin{lem}  For any $0 < R \leq  1$ and $\phi  \in {\mathcal T}(T(\mathbb R^{n}))$ one has
$$      \phi (t) = \int _{\vert \omega \vert =1}d\omega  \int _{ \mathbb R^{n}} K(x-iR\omega +i\omega -t)\phi (x-iR\omega )dx.$$
\end{lem}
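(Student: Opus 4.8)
The plan is to reduce the $n$-dimensional reproduction identity to the one-dimensional case already established in the excerpt, integrated over the sphere $\{|\omega|=1\}$. The key structural fact is that for each fixed direction $\omega$ with $|\omega|=1$, the map $s \mapsto \phi(x - is\omega)$ behaves like a one-variable test function in the complex line $\mathbb{C}\omega$, and the one-dimensional kernel identity can be applied along that line. First I would recall from the cited reference that $K(z)$ is rapidly decreasing and holomorphic in the tube-like region $\{|\mathrm{Im}\,z|^2 < 1 + |\mathrm{Re}\,z|^2\}$, so that for $0 < R \le 1$ the point $-iR\omega + i\omega = i(1-R)\omega$ has $|\mathrm{Im}| = |1-R| \le 1$, keeping us inside the domain of holomorphy; this is what makes all the contour shifts legitimate. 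The rapid decrease in the real directions guarantees the absolute convergence of the $x$-integral against any $\phi \in {\mathcal T}(T(\mathbb R^n))$, using the seminorm estimates $\Vert \phi\Vert^{T(K),j}$ from the definition of the test-function space.

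Next I would establish the identity by a Fourier-analytic computation, mirroring the one-dimensional derivation with $\mathrm{sech}$ but in the form $K(z) = (2\pi)^{-n}\int e^{i\langle z,\xi\rangle}/I(\xi)\,d\xi$ with $I(\xi) = \int_{|\omega|=1} e^{-\langle\omega,\xi\rangle}\,d\omega$. The starting point is the analogue of the delta decomposition: $\delta(x)$ on $\mathbb{R}^n$ should be written as $\int_{|\omega|=1} K(x + i\omega)\,d\omega$ in a suitable limiting/distributional sense, because $\int_{|\omega|=1} e^{-\langle\omega,\xi\rangle}\,d\omega / I(\xi) = 1$ identically — this is the $n$-dimensional replacement for the partition $e^{\xi}/(e^\xi+e^{-\xi}) + e^{-\xi}/(e^\xi+e^{-\xi}) = 1$. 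Then, exactly as in the one-variable argument, I would shift the contour of integration in $x$ by $-iR\omega$ for each $\omega$ simultaneously (permissible since no singularities of $K$ are crossed: the singularities sit where $\mathrm{Im}\,z$ has modulus an odd multiple of something $\ge 1$, and we stay with $|\mathrm{Im}| \le 1$ strictly controlled relative to $|\mathrm{Re}|$), producing the stated formula with the arguments $x - iR\omega + i\omega - t$ inside $K$ and $\phi$ evaluated at $x - iR\omega$.

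The main obstacle I anticipate is justifying the contour deformation in $x \in \mathbb{R}^n$ to $x - iR\omega$ uniformly in $\omega$, together with Fubini interchange of the $\omega$-integral, the $x$-integral, and the Fourier/limiting representation of $\delta$. Unlike the one-dimensional case, where the shift $x \mapsto x - iR\omega$ with $\omega = \pm 1$ is a genuine translation of the whole real line, here for $n \ge 2$ the set $\{x - iR\omega : x \in \mathbb{R}^n\}$ is an $n$-real-dimensional affine slice of $\mathbb{C}^n$, and one must check that $\phi$ is holomorphic on a neighborhood containing all these slices for $|\omega| = 1$, i.e., on $T(\overline{B(0,R)})$ — which holds since $\phi \in {\mathcal T}(T(\mathbb{R}^n))$ restricts to ${\mathcal T}_b(T(K))$ for every convex compact $K$, in particular $K = \overline{B(0,R)}$. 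Granting that, the deformation is a routine application of Cauchy's theorem in each complex variable combined with the decay of $K$ at real infinity and the polynomial-times-exponential growth control of $\phi$, and the Fubini steps are justified by the absolute integrability coming from rapid decrease of $K$. The identity then drops out by specializing the delta-representation and tracking the shift, giving precisely the claimed formula.
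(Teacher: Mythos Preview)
Your Fourier-analytic argument in the second paragraph is exactly the paper's proof: write $\hat{\phi}(\xi) = \int \phi(x - iR\omega)e^{i\langle x - iR\omega,\xi\rangle}\,dx$ via a contour shift, expand $K$ by its defining Fourier integral, interchange the $x$-, $\xi$-, and $\omega$-integrals so that the $\omega$-integral produces a factor $I(\xi)$ cancelling the denominator, and conclude by Fourier inversion. The opening remark about reducing to the one-dimensional case is a detour you never actually use and can be dropped --- the direct computation handles all $n$ at once.
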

\begin{proof}  Consider the Fourier transform $\hat{\phi }$ of $\phi$ and note that it can be represented as
$$ \hat{\phi} (\xi ) = \int _{ \mathbb R^{n}} \phi (x)e^{i\langle x, \xi \rangle } dx = \int _{ \mathbb R^{n}} \phi (x - iR\omega )e^{i\langle x - iR\omega , \xi \rangle } dx.$$
Thus we get
$$      \int _{\vert \omega \vert =1} d\omega  \int _{ \mathbb R^{n}} K(x - iR\omega  + i\omega  - t)\phi (x - iR\omega ) dx$$
$$      = (2\pi )^{-n} \int _{\vert \omega \vert =1} d\omega  \int _{ \mathbb R^{n}} dx \phi (x - iR\omega )\int d\xi e^{i\langle x - iR\omega  + i\omega  - t, \xi \rangle }/I(\xi ) $$
$$      = (2\pi )^{-n} \int _{\vert \omega \vert =1} d\omega  \int d\xi  \int _{ \mathbb R^{n}} dx \phi (x - iR\omega )e^{i\langle x - iR\omega , \xi \rangle } e^{i\langle i\omega  - t, \xi \rangle }/I(\xi ) $$
$$      = (2\pi )^{-n} \int _{\vert \omega \vert =1} d\omega  \int d\xi  \hat{\phi } (\xi ) e^{i\langle i\omega  - t, \xi \rangle }/I(\xi ) $$
$$      = (2\pi )^{-n}  \int d\xi  \int _{\vert \omega \vert =1} d\omega
\hat{\phi } (\xi ) e^{i\langle i\omega  - t, \xi \rangle }/I(\xi )
$$
$$      = (2\pi )^{-n}  \int d\xi  \hat{\phi } (\xi ) e^{-i\langle t, \xi \rangle }I(\xi )/I(\xi ) = (2\pi )^{-n}  \int d\xi  \hat{\phi } (\xi ) e^{-i\langle t, \xi \rangle } = \phi (t).$$
\end{proof}
\begin{cor} \label{corkr}
Let $K_{r}(z) = r^{-n} K(z/r)$. Then $K_{r}(z)$ is a rapidly
decreasing holomorphic function in $\{ z\in  \mathbb C^{n}; \vert
{\rm Im \, }z/r\vert ^{2} < 1 + \vert {\rm Re \, }z/r\vert ^{2}\}
$ = $\{ z\in  \mathbb C^{n}; \vert {\rm Im \, }z\vert ^{2} < r^{2}
+ \vert {\rm Re\, } z\vert ^{2}\} $, and for $0 < R \leq  r$, the
identity

$$      \phi (t) = \int _{\vert \omega \vert =1}d\omega  \int _{ \mathbb R^{n}} K_r (x-iR\omega +i\omega -t)\phi (x-iR\omega )dx$$
holds.
\end{cor}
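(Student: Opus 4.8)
My plan is to obtain the corollary from the preceding lemma by a pure rescaling, so that no new analytic estimate is needed.

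First I would dispose of the assertion about the domain and growth of $K_r$, which is immediate. Since $r>0$ is real, $\mathrm{Im}(z/r)=\mathrm{Im}(z)/r$ and $\mathrm{Re}(z/r)=\mathrm{Re}(z)/r$, so $z\mapsto z/r$ maps $\{z\in\C^{n};\ |\mathrm{Im}\,z|^{2}<r^{2}+|\mathrm{Re}\,z|^{2}\}$ biholomorphically onto the domain $\{w\in\C^{n};\ |\mathrm{Im}\,w|^{2}<1+|\mathrm{Re}\,w|^{2}\}$ of $K$. Hence $K_r(z)=r^{-n}K(z/r)$ is holomorphic on the stated region, and since $|z^{p}K_r(z)|=r^{|p|-n}\,|(z/r)^{p}K(z/r)|$, the rapid decrease of $K$ recalled just before the lemma transfers at once to $K_r$, the constants merely being multiplied by powers of $r$.

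For the reproducing identity I would first check that $\phi\in\mathcal{T}(T(\R^{n}))$ implies $\psi:=\phi(r\,\cdot\,)\in\mathcal{T}(T(\R^{n}))$: if $K\subset\R^{n}$ is convex compact then $z\in T(K)$ precisely when $rz\in T(rK)$, the set $rK$ is again convex compact, and a direct estimate of the seminorms gives $\|\psi\|^{T(K),j}\le\max(1,r^{-j})\,\|\phi\|^{T(rK),j}<\infty$, so $\psi\in\mathcal{T}_{b}(T(K))$ for every such $K$. Then I would apply the lemma to $\psi$ with the parameter $R/r\in(0,1]$ — this is exactly where the hypothesis $0<R\le r$ enters — to get
\[
  \phi(rt)=\psi(t)=\int_{|\omega|=1}d\omega\int_{\R^{n}}K\!\Bigl(x-i\tfrac{R}{r}\omega+i\omega-t\Bigr)\phi(rx-iR\omega)\,dx ,
\]
and finally substitute $x=u/r$ (so $dx=r^{-n}\,du$) and replace $t$ by $s/r$. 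Using $\tfrac{u}{r}-i\tfrac{R}{r}\omega+i\omega-\tfrac{s}{r}=\tfrac1r\bigl(u-iR\omega+ir\omega-s\bigr)$ and $r^{-n}K(\,\cdot/r)=K_r(\,\cdot\,)$, the right-hand side becomes $\int_{|\omega|=1}d\omega\int_{\R^{n}}K_r(u-iR\omega+ir\omega-s)\,\phi(u-iR\omega)\,du$, which (after renaming $s$ as $t$) is the assertion, in the form consistent with the one-dimensional formula recalled earlier. The interchanges of integration and the change of variables are justified, just as in the proof of the lemma, by the absolute integrability supplied by the rapid decrease of $K_r$ and of $\phi$ on the relevant tube. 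As a cross-check I would also carry out the direct route: repeat the Fourier-transform computation of the lemma's proof for $K_r(z)=(2\pi)^{-n}\int e^{i\langle z,\eta\rangle}/I(r\eta)\,d\eta$, where the factor $\int_{|\omega|=1}e^{-r\langle\omega,\eta\rangle}\,d\omega=I(r\eta)$ produced by the shift in the argument of $K_r$ cancels the $1/I(r\eta)$ and leaves $\phi(t)$.

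I do not expect a genuine obstacle here; the only point calling for a little care is the bookkeeping of the two rescalings — the cone opening scales by $r$ while the shift parameter must be replaced by $R/r$ so that the lemma applies — and one should confirm that the formula reduces, for $n=1$, to the one recalled earlier, which it does since $x-iR\omega+ir\omega=x+i(r-R)\omega$.
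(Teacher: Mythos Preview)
Your proof is correct and follows essentially the same route as the paper: a pure rescaling that reduces the identity for $K_r$ with parameter $R$ to the lemma for $K=K_1$ with parameter $R/r$, via the substitution $x\mapsto x/r$ (the paper writes it in the reverse direction, starting from the $K_r$-integral and substituting $x=ry$). Both arguments produce $K_r(x-iR\omega+ir\omega-t)$ rather than the $+i\omega$ printed in the statement, which is simply a typo in the corollary; your remark that the result must match the one-dimensional formula $K_r(x+i(r-R)\omega)$ is exactly the right consistency check.
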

\begin{proof} A straightforward calculation yields
$$      \int _{\vert \omega \vert =1}d\omega  \int _{ \mathbb R^{n}} K_{r}(x-iR\omega +ir\omega -t)\phi (x-iR\omega )dx$$
$$      = r^{-n} \int _{\vert \omega \vert =1}d\omega  \int _{ \mathbb R^{n}} K_{1}((x-iR\omega +ir\omega -t)/r)\phi (x-iR\omega )dx$$
$$      = \int _{\vert \omega \vert =1}d\omega  \int _{ \mathbb R^{n}} K_{1}(y-iR\omega /r+i\omega -t/r)\phi (r(y-iR\omega
/r))dy$$
$$   = \phi (r(t/r)) = \phi (t).$$
\end{proof}
Thus we can remove the restriction $R \leq  1$.
\begin{defn} \label{pincrease} Let $O$ be an open set in $\mathbb
R^{n}$. Denote by ${\mathcal O} _{0}(T(O))$  the set of those
functions $F(z)$ which are holomorphic in $T(O)$ and which satisfy
the following condition:
\newline
For any compact set $K \subset  O$, there exists a natural number
$j > 0$ such that
$$      \sup _{z \in  T(K)} \vert F(z)\vert (1 + \vert z\vert )^{-j} < \infty .$$
\end{defn}
Now we can state the global form of the edge of the wedge theorem
for tempered ultra-hyperfunctions.

\begin{thm}  \label{GEOW} Let $V_{+} = \{ y \in  \mathbb R^{n};
y_{1}
> \sqrt{\sum _{j=2}^{n} y_{j}^{2}}\} $ be the forward light-cone
in $\mathbb R^{n}$, $e = (1, 0, \ldots , 0) \in  \mathbb R^{n}$
and $\Gamma  = \ell e + V_{+}$.  Let $F_{1}(z) \in  {\mathcal O}
_{0}(T(\Gamma ))$, $F_{2}(z) \in  {\mathcal O} _{0}(T(-\Gamma ))$
and $C^{r}_{j} = \{ z_{j} \in \mathbb C; z_{j} = x + ir, \ -\infty
< x < \infty \} $.  Define
$$      u_{1}(\phi ) = \int _{ \prod _{j} C^{\eta _{j}}_{j}} F_{1}(z)\phi (z) dz,$$
$$      u_{2}(\phi ) =\int _{ \prod _{j} C^{-\eta _{j}}_{j}} F_{2}(z)\phi (z) dz$$
for $\phi  \in  {\mathcal T} (T(\mathbb R^{n}))$ and $\eta  =
(\eta _{1}, \ldots , \eta _{n}) \in  \Gamma $.  If $u_{1} =
u_{2}$, then $F_{1}(z)$ and $F_{2}(z)$ can be continued
analytically to each other and define an entire function.
\end{thm}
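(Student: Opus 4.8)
The plan is to follow the one-dimensional strategy essentially verbatim, replacing the finite sum over $\omega = \pm 1$ by the integral $\int_{|\omega|=1} d\omega$ and the two half-planes by the family of slabs indexed by directions $\omega$. First, choose $r > \ell$ and regularize by convolving with the dilated H\"ormander kernel: set
\begin{equation*}
  U_j(z) = u_j * K_r(z) = \int_{\prod_k C_k^{\eta_k}} F_j(\zeta) K_r(z-\zeta)\, d\zeta
\end{equation*}
for $\eta \in \Gamma$ (resp.\ $-\Gamma$ for $j=2$). By Corollary \ref{corkr}, $K_r(z)$ is holomorphic and rapidly decreasing on $\{z : |\mathrm{Im}\, z|^2 < r^2 + |\mathrm{Re}\, z|^2\}$, so $U_1$ is holomorphic on the open set $V_1 = \bigcup_{\eta \in \Gamma}\{z : |\mathrm{Im}(z - i\eta)| < r\}$ and $U_2$ on the analogous $V_2$; one checks $V_1 \cup V_2 = \mathbb{C}^n$ since $r > \ell$ (the gap between $\Gamma$ and $-\Gamma$ is controlled by $\ell$, and the slabs of half-width $r$ overlap). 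The key reproduction identity, the analogue of (\ref{ukernel1}), is that for $0 < R \le r$,
\begin{equation*}
  \int_{|\omega|=1} d\omega \int_{\mathbb{R}^n} U_j(x + i(r-R)\omega)\, \phi(x - iR\omega)\, dx = u_j(\phi),
\end{equation*}
obtained by inserting the definition of $U_j$, applying Fubini, and using the one-line computation in Corollary \ref{corkr} (with $t$ a translation variable) to collapse the $K_r$-integral back to an evaluation.

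Next, assuming $u_1 = u_2$: on the overlap $V_1 \cap V_2 = \{z : |\mathrm{Im}\, z| < r - \ell\}$ the two regularizations $U_1, U_2$ agree (both equal $u_1 * K_r = u_2 * K_r$ there, since the convolution kernel's singularities stay away), hence they glue to an entire function $U$ on $\mathbb{C}^n = V_1 \cup V_2$. Now set
\begin{equation*}
  H(z) = \int_{|\omega|=1} U(z + ir\omega)\, d\omega,
\end{equation*}
which is entire. Pairing $H$ against $\phi \in \mathcal{T}(T(\mathbb{R}^n))$ over a contour $\prod_j C_j^{y_j}$ with $y \in \Gamma$, one shifts contours (using that $U = U_1$ on $\{\mathrm{Im}\, z$ in the relevant slab$\}$ and Cauchy's theorem in each $z_j$-variable) and invokes the reproduction identity to get $\int H(x+iy)\phi(x+iy)\,dx = u_1(\phi) = \int F_1(x+iy)\phi(x+iy)\,dx$. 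Since this holds for all test functions $\phi$, and since both $H$ and $F_1$ are polynomially bounded holomorphic functions on $T(\Gamma)$, we conclude $H = F_1$ on $T(\Gamma)$; symmetrically $H = F_2$ on $T(-\Gamma)$. Thus $F_1$ and $F_2$ are both restrictions of the single entire function $H$, which is the assertion.

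The main obstacle — and the point deserving the most care — is the contour-shifting and Fubini juggling needed to make the reproduction identity and the final pairing of $H$ with $\phi$ rigorous: one must track that all the $z_j$-contour deformations stay inside the domain of holomorphy $\{|\mathrm{Im}\, z|^2 < r^2 + |\mathrm{Re}\, z|^2\}$ of $K_r$, that the polynomial growth of $F_j$ against the rapid decay of $K_r$ legitimizes each interchange of integrals, and that the substitution $z \mapsto z + ir\omega$ in the definition of $H$ lands in the correct slab where $U = U_j$ for the appropriate $j$ depending on $\omega$. Compared with the one-variable case, the only genuinely new feature is that the direction $\omega$ now ranges over the sphere rather than $\{\pm 1\}$, but Lemma \ref{} (the Fourier-side identity $\int_{|\omega|=1} e^{i\langle i\omega, \xi\rangle} d\omega = I(\xi)$) is precisely what makes the averaged kernel reproduce test functions, so this generalization is painless. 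The geometry of the light-cone $\Gamma = \ell e + V_+$ enters only through the single quantitative fact that translating $V_1, V_2$ by the slabs of half-width $r > \ell$ covers all of $\mathbb{C}^n$; no finer property of $V_+$ is used, which is why the same proof will adapt to the local version in the next section once $\mathbb{C}^n$ is replaced by the complement of a carrier.
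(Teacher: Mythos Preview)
Your outline follows the paper's strategy closely, but there is one genuine geometric error that breaks the argument as written. You assert that $V_1 \cup V_2 = \mathbb{C}^n$ because ``the gap between $\Gamma$ and $-\Gamma$ is controlled by $\ell$, and the slabs of half-width $r$ overlap.'' This is true for $n=1$ but false for $n\ge 2$. With $\Gamma = \ell e + V_+$ a shifted light-cone, the $r$-neighborhood of $\Gamma$ in the imaginary directions is $\{y\in\mathbb{R}^n:\mathrm{dist}(y,\Gamma)<r\}$, and this set together with its reflection does \emph{not} cover $\mathbb{R}^n$: for instance the point $y=(0,M,0,\ldots,0)$ has distance of order $M/\sqrt{2}$ from both $\Gamma$ and $-\Gamma$, so it lies outside $V_1\cup V_2$ once $M$ is large. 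Thus $U_1$ and $U_2$ glue only to a function $U$ holomorphic on the tube $V_1\cup V_2$, not on all of $\mathbb{C}^n$.

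The paper closes this gap by invoking Bochner's theorem on tube domains: a function holomorphic on a connected tube $\mathbb{R}^n+iB$ extends holomorphically to $\mathbb{R}^n+i\,\mathrm{conv}(B)$. Since the convex hull of $\{y:\mathrm{dist}(y,\Gamma)<r\}\cup\{y:\mathrm{dist}(y,-\Gamma)<r\}$ is all of $\mathbb{R}^n$ (any point lies on a segment joining a point deep in $V_+$ to one deep in $-V_+$), Bochner yields an entire extension of $U$, after which your definition of $H$ and the remaining steps go through exactly as you describe. So the missing ingredient is precisely this appeal to Bochner's tube theorem; without it the function $H(z)=\int_{|\omega|=1}U(z+ir\omega)\,d\omega$ is not even defined for all $z$.
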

\begin{proof}
Note first that $u_{1}(\phi )$ (resp. $u_{2}(\phi )$) does not
depend on the path $C^{\eta _{j}}_{j}$ (resp. $C^{-\eta
_{j}}_{j}$).  This can be proved by applying Cauchy's integral
theorem $n$ times. Let $r > \ell $.  Observe that the function
$U_1$ defined by
$$      U_{1}(z) = u_{1}*K_{r}(z) = \int _{ \mathbb R^{n}} F_{1}(\xi  + i\eta )K_{r}(z - \xi  - i\eta ) d\xi, $$
for $\eta  \in  \Gamma $, is analytic in
$$      V_{1} = \cup _{\eta \in \Gamma } \{ z\in  \mathbb C^{n}; \vert {\rm Im \, }(z - i\eta )\vert  < r\}  = \{ z \in  \mathbb C^{n}; {\rm dist \, }({\rm Im \, }z , \Gamma ) < r\} $$
$$      \supset  \{ z \in  \mathbb C^{n}; {\rm dist \, }({\rm Im \, }z , \ell e) < r\}  \supset  \{ z \in  \mathbb C^{n}; \vert {\rm Im \, }z \vert  < r - \ell \} $$
and similarly $U_{2}(z) = u_{2}*K_{r}(z)$ is analytic in
$$      V_{2} = \cup _{\eta \in -\Gamma } \{ z\in  \mathbb C^{n}; \vert {\rm Im \, }(z - i\eta )\vert  < r\}  =  \{ z \in  \mathbb C^{n}; {\rm dist \, }({\rm Im \, }z , -\Gamma ) < r\} $$
$$      \supset  \{ z \in  \mathbb C^{n}; {\rm dist \, }({\rm Im \, }z , -\ell e) < r\}  \supset  \{ z \in  \mathbb C^{n}; \vert {\rm Im \, }z \vert  < r - \ell \} .$$
 Corollary \ref{corkr} implies
\beq \label{ukernel}
      \int _{\vert \omega \vert =1} d\omega  \int dx U_{j}(x +
i(r - R)\omega ) \phi (x - iR\omega ) \eeq
$$      = \int _{\vert \omega \vert =1} d\omega  \int dx u_{j}*K_{r}(x + i(r - R)\omega ) \phi (x - iR\omega )$$
$$      = \int _{\vert \omega \vert =1} d\omega  \int dx \int d\xi  F_{j}(\xi  + i\eta )K_{r}(x + i(r - R)\omega  - \xi  - i\eta ) \phi (x - iR\omega )$$
$$      = \int d\xi  \int _{\vert \omega \vert =1} d\omega  \int dx F_{j}(\xi  + i\eta )K_{r}(x + i(r - R)\omega  - \xi  - i\eta ) \phi (x - iR\omega )$$
$$      = \int d\xi  F_{j}(\xi  + i\eta )\phi (\xi  + i\eta ) = u_{j}(\phi ).$$
The relation $u_{1}(\phi ) = u_{2}(\phi )$ implies $U_{1}(z) =
U_{2}(z)$ in $V_{1} \cap V_{2} \supset \{ z \in  \mathbb C^{n};
\vert {\rm Im \, }z \vert  < r - \ell \} $.  Then $U_{1}(z)$ and
$U_{2}(z)$ are continued to an analytic function $U(z)$ in $V_{1}
\cup V_{2}$.  Moreover, $U(z)$ is analytically continued to the
convex envelope of $V_{1} \cup V_{2}$ by Bochner's theorem on
tublar domains (see [\cite{Vl64}]).  Since the convex envelope of
$V_{1} \cup V_{2}$ is entire space $\mathbb C^{n}$, $U(z)$ is
analytically continued to an entire function.  The function $H$,
$$      H(z) = \int _{\vert \omega \vert =1}d\omega  U(z + ir\omega ),$$
is an entire function which satisfies
$$  \int_{ \prod _{j} C^{\eta _{j}}} H(z)\phi (z) dz = \int _{ \prod _{j} C^{\eta _{j}}} dz\int _{\vert \omega \vert
=1}d\omega  U(z + ir\omega )\phi (z)$$
$$      = \int _{\vert \omega \vert =1}d\omega  \int _{ \mathbb R^{n}} dx U(x + i\eta  + ir\omega )\phi (x + i\eta )$$
$$      = \int _{\vert \omega \vert =1}d\omega  \int _{ \mathbb R^{n}} dx U(x + i(r - [\ell  + \sigma ])\omega )\phi (x - i[\ell  + \sigma ]\omega )$$
$$      = \int _{\vert \omega \vert =1}d\omega  \int _{ \mathbb R^{n}} dx U_{1}(x + i(r - [\ell  + \sigma ])\omega )\phi (x - i[\ell  + \sigma ])\omega )$$
$$      = u_{1}(\phi ) = \int _{ \prod _{j} C^{\eta _{j}}} F_{1}(z)\phi (z) dz$$
for $0 < \sigma  \leq  r - \ell $, where we used the fact that
$U(z) = U_{1}(z)$ in $\{ z \in \mathbb C^{n}; \vert {\rm Im \, }z
\vert < r - \ell \} $ and Relation (\ref{ukernel}). Thus we have
$$      \int _{ \prod _{j} C^{\eta _{j}}} H(z)\phi (z) dz = \int _{ \prod _{j} C^{\eta _{j}}} F_{1}(z)\phi (z) dz$$
for $\phi  \in  {\mathcal T} (T(\mathbb R^{n}))$.  This shows that
$H(z) = F_{1}(z)$ in $T(\Gamma )$, and in the same way we get
$H(z) = F_{2}(z)$ in $T(-\Gamma )$.  Thus $F_{1}(z)$ and
$F_{2}(z)$ are analytically continued to an entire function
$H(z)$.  This completes the proof.
\end{proof}

\section{Local edge of the wedge theorem}
We begin by proving a regularization result for tempered
ultrahyperfunctions using the kernel $K_r$.
\begin{prop}  \label{Zanalytic} Let $L$ be an open set in $\{ w \in
\mathbb C^{n}; \vert {\rm Im \, }w\vert  < r\} $ and $u \in
{\mathcal T} (L)^{\prime }$.  Then $U(z) = K_{r}*u(z)$ is
holomorphic in
$$      Z = \{ z \in  \mathbb C^{n}; \vert {\rm Im \, }(z - w)\vert ^{2} < r^{2} + \vert {\rm Re \, }(z - w)\vert ^{2}, \  \forall w \in  L\} .$$
 Furthermore, introduce the function \beq \label{eq:gr}
    g_r(x) = \inf \{ \sqrt{r^{2} + \vert x - {\rm Re \, }w\vert
^{2}} - \vert {\rm Im \, }w\vert ; w \in  L\}. \eeq Then the
following inclusion is valid. \beq \label{eq:imaginary}
     Z \supset  \{ z = x + iy \in  \mathbb C^{n}; \vert y\vert <
g_r(x), x \in  \mathbb R^{n}\} . \eeq

\end{prop}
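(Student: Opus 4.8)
The statement splits into two essentially independent parts: the holomorphy of $U=K_r*u$ on $Z$, which is where the work lies, and the elementary geometric inclusion \eqref{eq:imaginary}.

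\emph{Holomorphy of $U$.} By definition $U(z)=u_w(K_r(z-w))$, so one must first check that, for $z\in Z$, the function $\kappa_z\colon w\mapsto K_r(z-w)$ is an admissible argument of $u$. Corollary \ref{corkr} says that $K_r$ is holomorphic and rapidly decreasing on $\{\zeta\in\mathbb C^n;\ |{\rm Im}\,\zeta|^2<r^2+|{\rm Re}\,\zeta|^2\}$, and the defining inequality of $Z$ is exactly the requirement that $z-w$ lie in this region for every $w\in L$. Hence $\kappa_z$ is holomorphic on $L$ (on a neighbourhood of $\overline L$ when $z$ is an interior point of $Z$), and the rapid decrease of $K_r$ guarantees that $\kappa_z$ belongs to the test-function space ${\mathcal T}(L)$; thus $U$ is well defined on $Z$. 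To obtain holomorphy I would show that $z\mapsto\kappa_z$ is a holomorphic ${\mathcal T}(L)$-valued map on the interior of $Z$ and then compose with the continuous linear functional $u$. Fixing $z_0$ and a small closed ball $B$ around $z_0$ contained in $Z$, the set $\{z-w;\ z\in B,\ w\in L\}$ stays inside the holomorphy domain of $K_r$ with a fixed transversal margin and away from its singularities; Cauchy estimates for $K_r$ and its $z$-derivatives together with the rapid decrease then give (a) continuity of $B\ni z\mapsto\kappa_z\in{\mathcal T}(L)$ and (b) convergence in ${\mathcal T}(L)$ of the difference quotient $h^{-1}(\kappa_{z+he_j}-\kappa_z)$ to $\partial_{z_j}\kappa_z$ as $h\to 0$, for each coordinate $j$. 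Consequently $U=u\circ\kappa$ is continuous and separately holomorphic, hence holomorphic; alternatively one applies Morera's theorem in each coordinate, interchanging $u$ with a contour integral (justified because the corresponding Riemann sums converge in ${\mathcal T}(L)$) and using $u_w\big(\oint K_r(z-w)\,dz_j\big)=u_w(0)=0$.

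The main obstacle is precisely the uniform estimate behind step (b): one has to control all the polynomial-weighted seminorms of $\kappa_z$ and of its difference quotients simultaneously and uniformly for $z\in B$, and this is where the rapid decrease of $K_r$ on the shrunk cones $\{|{\rm Im}\,\zeta|^2<\lambda(r^2+|{\rm Re}\,\zeta|^2)\}$, $\lambda<1$, and the Cauchy bounds for the derivatives of $K_r$ are genuinely used.

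\emph{The inclusion \eqref{eq:imaginary}.} Let $z=x+iy$ with $|y|<g_r(x)$, $g_r$ as in \eqref{eq:gr}; I must verify $z\in Z$, i.e. $|y-{\rm Im}\,w|^2<r^2+|x-{\rm Re}\,w|^2$ for every $w\in L$. Fix $w\in L$. Since $g_r(x)$ is an infimum, $g_r(x)\le\sqrt{r^2+|x-{\rm Re}\,w|^2}-|{\rm Im}\,w|$, hence $|y|+|{\rm Im}\,w|<\sqrt{r^2+|x-{\rm Re}\,w|^2}$; combining this with $|y-{\rm Im}\,w|\le|y|+|{\rm Im}\,w|$ and squaring yields the assertion. (Note $g_r\ge 0$ since $|{\rm Im}\,w|<r$ for $w\in L$; when $g_r(x)=0$ the left-hand set in \eqref{eq:imaginary} is empty over that $x$ and nothing has to be proved. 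One also checks that $g_r$ is $1$-Lipschitz, being the infimum of a family of uniformly $1$-Lipschitz functions that is bounded below, so the set in \eqref{eq:imaginary} is genuinely open.)
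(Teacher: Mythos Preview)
Your argument is correct and follows the same route as the paper: for the holomorphy you verify that $w\mapsto K_r(z-w)$ lies in the test-function space whenever $z\in Z$ (the paper states this in one line and leaves the vector-valued holomorphy of $z\mapsto u_w(K_r(z-w))$ implicit, whereas you spell out the standard difference-quotient/Morera argument), and for the inclusion you use the triangle inequality $|{\rm Im}(z-w)|\le|{\rm Im}\,z|+|{\rm Im}\,w|$ together with the definition of $g_r$ as an infimum, exactly as the paper does. Your version simply supplies the analytic details that the paper's very brief proof takes for granted.
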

\begin{proof}
Since $K_{r}(z)$ is a rapidly decreasing holomorphic function in
$\{ z\in \mathbb C^{n}; \vert {\rm Im \, }z\vert ^{2} < r^{2} +
\vert {\rm Re\, } z\vert ^{2}\} $, if $z \in Z$, $K_{r}(z - w)$ is
a rapidly decreasing holomorphic function of $w$ in a neighborhood
of $L$. The inclusion (\ref{eq:imaginary}) can be shown as
follows:
$$      Z = \{ z \in  \mathbb C^{n}; \vert {\rm Im \, }(z - w)\vert  < \sqrt{r^{2} + \vert {\rm Re \, }(z - w)\vert ^{2}}, \  \forall w \in  L\} $$
$$      \supset  \{ z \in  \mathbb C^{n}; \vert {\rm Im \, }z\vert  + \vert {\rm Im \, }w\vert  < \sqrt{r^{2} + \vert {\rm Re \, }(z - w)\vert ^{2}}, \  \forall w \in  L\} $$
$$    \supset  \{ z = x + iy \in  \mathbb C^{n}; \vert y\vert <
g_r(x), x \in  \mathbb R^{n}\}. $$

\end{proof}
\begin{thm}  \label{lwedge}  Let $F_{i}(z)$ $(i = 1, 2)$ be holomorphic functions and $u_{i}$ be the tempered ultra-hyperfunctions defined by $F_{i}(z)$ as in Theorem \ref{GEOW}. Let $L$ be an open set in $\{ w \in \mathbb C^{n}; \vert {\rm Im
\, }w \vert < \ell \} $ such that the set $O = \{ x \in \mathbb
R^{n}; g_r(x) > r\} $ contains an open set $Q$ such that ${\rm
dist \, }(\partial O, Q)
> 2\ell $,
 for $r > \ell /(\sqrt{2} - 1)$ and $g_r(x)$ of
(\ref{eq:gr}).
 Assume that $u_{1} - u_{2} \in {\mathcal T}
(L)^{\prime }$.  Then $F_{i}(z)$ are analytically continued to $O$
and coincide there.
\end{thm}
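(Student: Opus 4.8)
The plan is to follow the one-dimensional model computation from Section~2.1, replacing the path-integral regularization by the spherical-mean version of H\"ormander's kernel used in Theorem~\ref{GEOW}. First I would set $U_{12}(z)=K_r*(u_1-u_2)(z)$. Since $u_1-u_2\in\mathcal T(L)^{\prime}$ and $L\subset\{|\operatorname{Im}w|<\ell\}\subset\{|\operatorname{Im}w|<r\}$, Proposition~\ref{Zanalytic} gives that $U_{12}$ is holomorphic on $Z\supset\{x+iy:|y|<g_r(x)\}$, and in particular on the tube over $O=\{x:g_r(x)>r\}$, where the slice has half-width $>r$. Writing $u_1=u_2+(u_1-u_2)$ yields $U_1(z)=U_2(z)+U_{12}(z)$ on the overlap of their domains. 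As in the global proof, $U_1$ is holomorphic on $V_1=\{\operatorname{dist}(\operatorname{Im}z,\Gamma)<r\}$, $U_2$ on $V_2=\{\operatorname{dist}(\operatorname{Im}z,-\Gamma)<r\}$, and $U_{12}$ on (the relevant part of) $T(O)$; the identity lets me continue $U_1$ analytically to $U_1^{\prime}$ holomorphic on $V_1\cup(V_2\cap T(O))$, hence across $T(O)$, and likewise $U_2$ to $U_2^{\prime}$. Then $H_i(z)=\int_{|\omega|=1}U_i^{\prime}(z+ir\omega)\,d\omega$ are holomorphic on the relevant domain containing $T(\pm\Gamma)$ together with $T(O)$, and the reproducing identity (\ref{ukernel}) shows $H_i=F_i$ on $T(\pm\Gamma)$, so $F_1,F_2$ each extend holomorphically to $T(O)$.

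It remains to prove $H_1=H_2$ on $T(O)$, or at least on $T(Q)$. Here I would run the Gaussian-approximate-identity argument: for $\xi\in Q$ (where the condition $\operatorname{dist}(\partial O,Q)>2\ell$ guarantees a ball $B(\xi,2\ell)\subset O$) put $E_\xi^t(z)=(4\pi t)^{-n/2}e^{-(z-\xi)^2/4t}$ and compute $u_i(E_\xi^t)$ two ways. Deforming the contour of $u_i$ out to an appropriate real $n$-chain (as in the one-variable picture, Figure~1, but now a chain that stays inside $O$ on its "raised" part and runs through $\operatorname{Re}w\notin$ a neighbourhood of the carrier of $u_1-u_2$ on its ends) and using $H_i=u_i*\text{(kernel)}$ on $T(O)$, I get $u_i(E_\xi^t)\to H_i(\xi)$ as $t\to 0+$; on the other hand $|u_1(E_\xi^t)-u_2(E_\xi^t)|\le M_V\sup_{z\in V}|E_\xi^t(z)|\to 0$ because $u_1-u_2$ has carrier in $V$, a neighbourhood of $L$, and the ball $B(\xi,2\ell)$ lies at positive distance from $\operatorname{Re}L$ with $|\operatorname{Im}w|<\ell$ for $w\in L$, so the Gaussian estimate $|E_\xi^t(z)|\le(4\pi t)^{-n/2}e^{(|\operatorname{Im}z|^2-|\xi-\operatorname{Re}z|^2)/4t}\to 0$ applies. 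Hence $H_1(\xi)=H_2(\xi)$ for $\xi\in Q$, and by analytic continuation $F_1=F_2$ on (the connected component of) $T(O)$ meeting $Q$.

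The main obstacle I anticipate is bookkeeping the geometry so that the deformed contour used to evaluate $u_i(E_\xi^t)$ stays simultaneously (i)~inside the holomorphy domain $T(O)$ of $H_i$ on its elevated portion, (ii)~inside a neighbourhood $V$ of $L$ where the carrier estimate for $u_1-u_2$ is available, and (iii)~far enough (in real part) from the carrier so that the Gaussian kernel decays on the elevated part while concentrating at $\xi$ on the real ends. This is exactly where the quantitative hypotheses $r>\ell/(\sqrt2-1)$ (ensuring $g_r(x)>r$ has a nonempty solution set, equivalently that the tube over $O$ has slices of half-width exceeding $r$, so the kernel's singularities $z=i(1+2n)r\omega$ are avoided) and $\operatorname{dist}(\partial O,Q)>2\ell$ (ensuring the $2\ell$-elevation of the contour over $\xi\in Q$ does not leave $O$) enter; verifying that these two conditions together make all three requirements compatible is the crux, and the rest is a routine adaptation of the global proof plus the $n$-dimensional Gaussian estimate already recorded in (\ref{delta}).
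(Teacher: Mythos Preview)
Your plan matches the paper's: convolve with $K_r$, use $U_1=U_2+U_{12}$ to enlarge the domain of $U_1$, form $H_i(z)=\int_{|\omega|=1}U_i'(z+ir\omega)\,d\omega$, identify $H_i$ with $F_i$ on $T(\pm\Gamma)$ via the reproducing identity, and then separate $H_1$ from $H_2$ on $Q$ with the Gaussian $E_\xi^t$. However, you have misplaced the key geometric obstruction, and this is where your argument would stall.

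The step you gloss over is why $H_1$ is holomorphic on a \emph{connected} set containing both $T(\Gamma)$ and a real neighbourhood of $O$; without this, $H_1=F_1$ on $T(\Gamma)$ does not yield an extension of $F_1$ to $O$. For $H_1(x)$ with $x\in O$ to be defined you need $x+ir\omega$ in the domain of $U_1'$, namely $V_1\cup\bigl(V_2\cap\{|y|<g_r(x)\}\bigr)$, for \emph{every} unit $\omega$. The bound $|r\omega|<g_r(x)$ is just the definition of $O$; the genuine constraint is whether the sphere $\{r\omega:|\omega|=1\}$ lies in $\{\operatorname{dist}(y,\Gamma)<r\}\cup\{\operatorname{dist}(y,-\Gamma)<r\}$. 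For $\omega\perp e$ one has $\operatorname{dist}(r\omega,\pm\Gamma)=(r+\ell)/\sqrt2$, and this is $<r$ precisely when $r>\ell/(\sqrt2-1)$. \emph{That} is the role of the hypothesis: it gives $B_{r+\delta}\subset V_1\cup V_2$ for small $\delta>0$ (the paper's Figure~2), and then $\bigcap_{|\omega|=1}(W_{r,\delta}+r\omega)$ contains a connected set $\tilde\Gamma_\delta\supset\Gamma\cup B_\delta$ (Figure~3), which is what links $T(\Gamma)$ to $\{g_r>r+\delta\}\subset\mathbb{R}^n$ inside the domain of $H_1$. Your stated reasons---nonemptiness of $\{g_r>r\}$ and ``avoiding the kernel's singularities $i(1+2n)r\omega$''---are incorrect: $\{g_r>r\}$ is nonempty for every $r>0$, and the $n$-dimensional $K_r$ has no isolated poles to avoid. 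A minor additional slip: the surface $S_1$ used to evaluate $u_1(E_\xi^t)$ is real over $O$ (so the Gaussian concentrates at $\xi$) and raised into $T(\Gamma)$ where $g_r$ is small (so $H_1=F_1$ there and the Gaussian decays), not the other way around as you describe.
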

\begin{proof}  We use the same notation as in Theorem \ref{GEOW}.  Then $U_{1}(z) =
u_{1}*K_{r}(z)$ is analytic in
$$      V_{1} = \{ z \in  \mathbb C^{n}; {\rm dist \, }({\rm Im \, }z , \Gamma ) < r\} $$
and $U_{2}(z) = u_{2}*K_{r}(z)$ is analytic in
$$      V_{2} = \{ z \in  \mathbb C^{n}; {\rm dist \, }({\rm Im \, }z , -\Gamma ) < r\} .$$
Since $u_{1} - u_{2} \in  {\mathcal T} (L)^{\prime }$, it follows
from Proposition \ref{Zanalytic} that $U_{12}(z) = (u_{1} -
u_{2})*K_{r}(z)$ is analytic in
$$  \supset  \{ z = x + iy \in  \mathbb C^{n}; \vert y\vert <
g_r(x), x \in  \mathbb R^{n}\} .$$
 Since $u_{1}*K_{r}(z) =
u_{2}*K_{r}(z) + (u_{1} - u_{2})*K_{r}(z)$, $U_{1}(z)$ is analytic
in
$$      V_{1} \cup (V_{2} \cap \{ z = x + iy \in  \mathbb C^{n}; \vert y\vert  < g_r(x), x \in  \mathbb R^{n}\} )$$
$$      \supset   \{ x \in  \mathbb R^{n}; g_r(x) > r + \delta \} $$
$$      \times  i\left( \{ y \in  \mathbb R^{n}; {\rm dist \, }(y , \Gamma ) < r\}
  \cup (\{ y \in  \mathbb R^{n}; {\rm dist \, }(y , - \Gamma ) < r\} \cap B_{r + \delta } )\right)$$
$$     =  \{ x \in  \mathbb R^{n}; g_r(x) > r + \delta \} \times  i\left( \{ y \in  \mathbb R^{n}; {\rm dist \, }(y ,
\Gamma ) < r\}
  \cup B_{r + \delta } \right)$$
for any $r + \delta  < \sqrt{2}r + \ell $, where $B_r = \{ y \in
\mathbb R^{n}; \vert y\vert  < r\} $ (see Figure 2).  Note that if
$r > \ell /(\sqrt{2} - 1)$ then there exists $\delta > 0$ such
that $r + \delta  < \sqrt{2}r + \ell $.
\begin{center}
\begin{figure}[h!]
\includegraphics[width=0.9\textwidth]{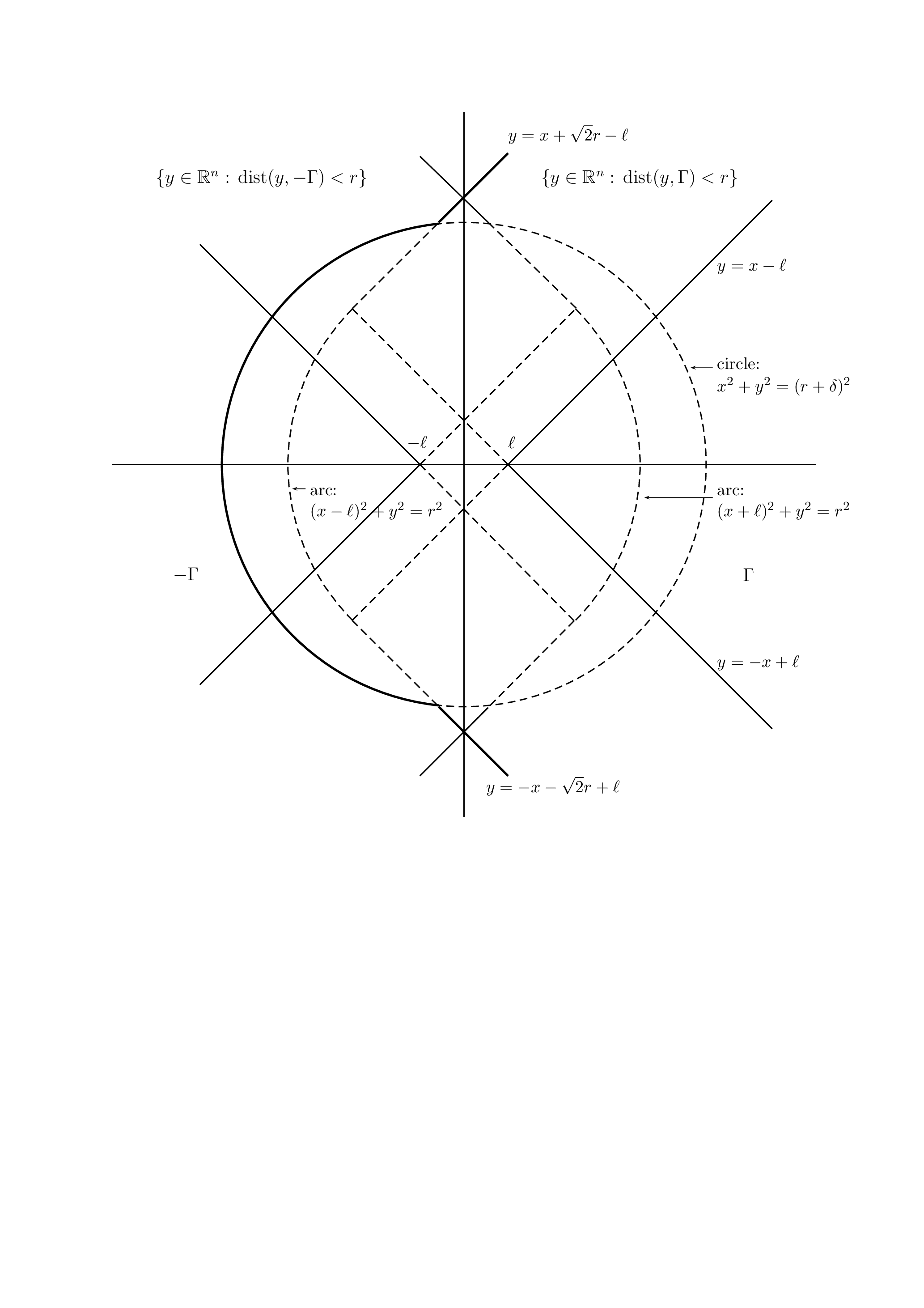}
\caption{$\{ y \in  \mathbb R^{n}; {\rm dist \, }(y , \Gamma ) <
r\}
  \cup (\{ y \in  \mathbb R^{n}; {\rm dist \, }(y , - \Gamma ) < r\} \cap B_{r + \delta }) = \{ y \in  \mathbb R^{n}; {\rm dist \, }(y , \Gamma ) < r\}
  \cup B_{r + \delta } $}
  \end{figure}
\end{center}
Denote
$$ W_{r, \delta } = \{ y \in  \mathbb R^{n}; {\rm dist \, }(y , \Gamma ) < r \} \cup B_{r + \delta } .$$
If
$$      r + \delta > \sqrt{(r/\sqrt{2})^{2} + (r/\sqrt{2} - \ell )^{2}}$$
then $\cap _{\vert \omega \vert =1} (W_{r, \delta } + r\omega )$
strictly contains the set  $\Gamma $, and if $\delta
> 0$ then $ \cap _{\vert \omega \vert =1} (W_{r, \delta } + r\omega )$
contains a connected set $\tilde{\Gamma } _{\delta }$ containing
$\Gamma $ and an open ball $B_{\delta } $ (see Figure 3).
\begin{center}
\begin{figure}[h!]
\includegraphics[width=0.9\textwidth]{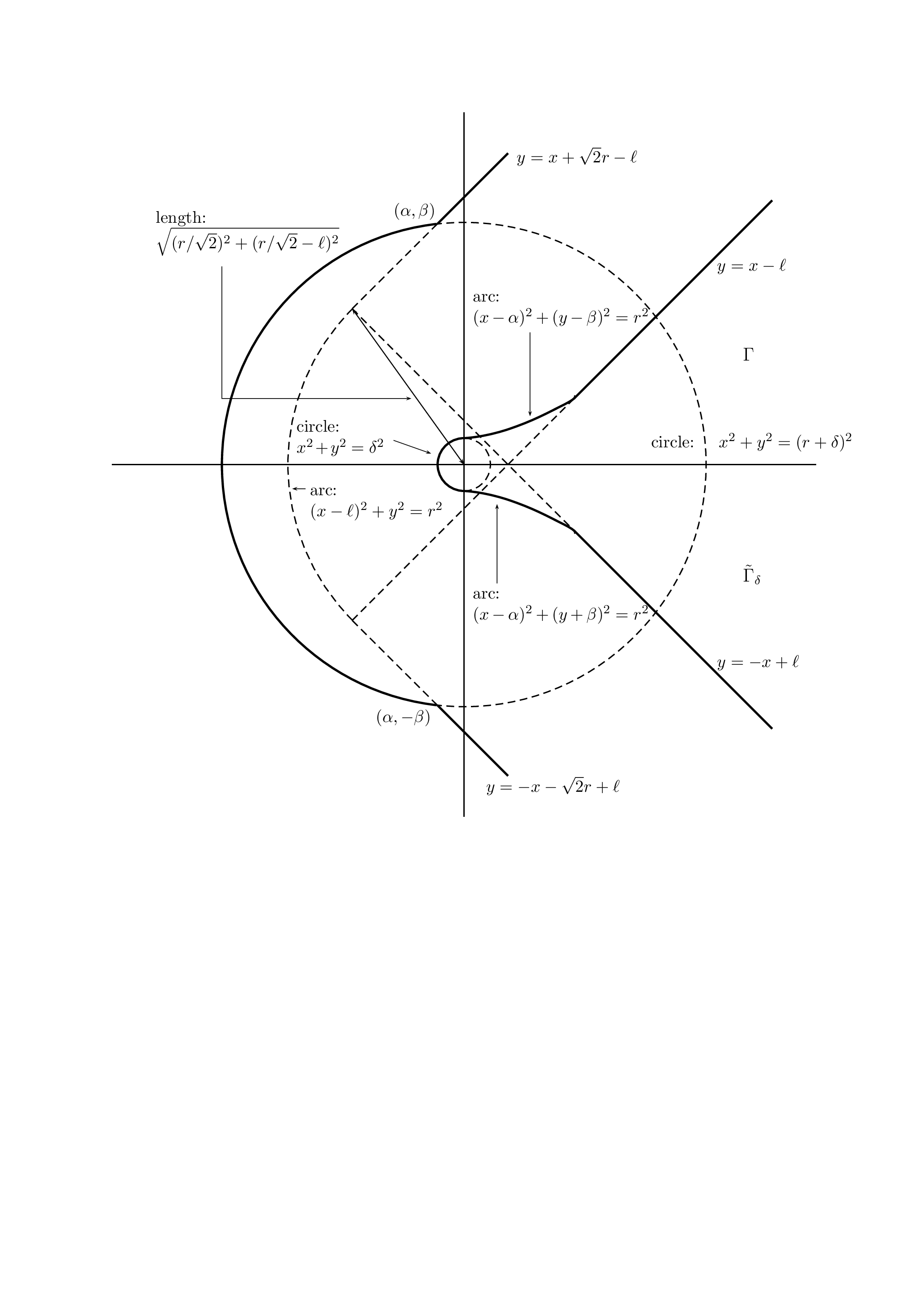}
\caption{The connected set $\tilde{\Gamma}_{\delta }$ containing
$\Gamma $ and $B_{\delta } $}
\end{figure}
\end{center}
As before, consider
$$ H_{1}(z) = \int _{\vert \omega \vert =1}d\omega  U_{1}(z + ir\omega ).$$
 $H_{1}(z)$ is holomorphic in the set
$$ \cap _{\vert \omega \vert =1} [\{ V_{1} \cup (V_{2} \cap \{ z = x + iy \in  \mathbb C^{n}; \vert y\vert  < g_r(x), x \in  \mathbb R^{n}\} )\}  + ir\omega ]$$
which contains the sets $T(\Gamma )$ and
$$ \{ x \in  \mathbb R^{n}; g_r(x) > r + \delta \}  \times  i \cap _{\vert \omega \vert =1} (W_{r + \delta } + r\omega )  \supset  \{ x \in  \mathbb R^{n};g_r(x) > r + \delta \}
\times  i \tilde{\Gamma } _{\delta }.$$ This implies
$$      \int _{ \prod _{j} C^{\eta _{j}}} H_{1}(z)\phi (z) dz = \int _{ \prod _{j} C^{\eta _{j}}} dz\int _{\vert \omega \vert =1}d\omega  U_{1}(z + ir\omega )\phi (z)$$
$$      = \int _{\vert \omega \vert =1}d\omega  \int _{ \mathbb R^{n}} dy U_{1}(y + i\eta  + ir\omega )\phi (y + i\eta )$$
$$      = \int _{\vert \omega \vert =1}d\omega  \int _{ \mathbb R^{n}} dy U_{1}(y + i(r - [\ell + \sigma ])\omega )\phi (y - i[\ell  + \sigma ]\omega )$$
$$      = u_{1}(\phi ) = \int _{ \prod _{j} C^{\eta _{j}}} F_{1}(z)\phi (z) dz.$$
This shows
$$  \int _{ \prod _{j} C^{\eta _{j}}} H_{1}(z)\phi (z) dz = \int _{ \prod _{j} C^{\eta _{j}}} F_{1}(z)\phi (z) dz$$
for $\phi  \in  {\mathcal T} (T(\mathbb R^{n}))$ and therefore
$H_{1}(z) = F_{1}(z)$ in $T(\Gamma )$ and $F_{1}(z)$ is
analytically continued to
$$    \{ x \in  \mathbb R^{n}; g_r(x) > r + \delta \}  + i\tilde{\Gamma }_{\delta }.$$
In the same way we get $H_{2}(z) = F_{2}(z)$ in $T(- \Gamma )$ and
$F_{2}(z)$ is analytically continued to
$$ \{ x \in  \mathbb R^{n}; g_r(x) > r + \delta \}  - i\tilde{\Gamma }_{\delta }.$$
In order to show $H_{1}(z) = H_{2}(z)$, we choose a surface
$S_{1}$ contained in the domain of the holomorphy of $H_{1}(z)$
such that
$$      S_{1} = \{ z = x + iy \in  \mathbb C^{n}; y_{1} = f_{1}(x), y_{j} = 0, 2 \leq  j \leq  n, x \in  \mathbb R^{n}\} ,$$
where $0 \leq f_{1}(x) \leq \ell + \delta $ is a continuous
function satisfying $f_{1}(x) = 0$ for $x \in  O = \{ x \in
\mathbb R^{n}; g_r(x) > r\}$ and $f_{1}(x) = \ell + \delta $ for
$$      x \in  \{ x \in  \mathbb R^{n}; g_r(x) < \sqrt{(r/\sqrt{2})^{2} + (r/\sqrt{2} - \ell )^{2}}\} .$$
For $\xi  \in  Q \subset O$ define  $E^{t}_{\xi }(z) = (4\pi
t)^{-n/2} e^{-(\xi  - z)^{2}/4t}$. Note that the open ball with
center $\xi $ and radius $2\ell $ is contained in $O$, $B_{2\ell
}(\xi ) \subset O \subset S_{1}$. Then, for $\xi  \in  Q$,
$$ u_{1}(E_{\xi }) = \int _{S_{1}} H_{1}(z)E_{\xi }^{t}(z)dz$$
$$= \int_{B_{2\ell}(\xi)} H_{1}(z)E(\xi-z)dz + \int_{S_{1}\backslash B_{2\ell}(\xi)}H_{1}(z)E_{\xi }^{t}(z)dz$$
and
$$ \int_{B_{2\ell }(\xi)} H_{1}(z)E_{\xi}^{t}(z)dz = \int_{B_{2\ell}(\xi)} H_{1}(z)(4\pi t)^{-n/2} e^{-(\xi  - x)^{2}/4t} dx \rightarrow  H_{1}(\xi )$$
as $t \rightarrow  0+$.  Since
$$      \int _{S_{1} \backslash B_{2\ell }(\xi )}H_{1}(z)E_{\xi }^{t}(z)dz = \int _{S_{1} \backslash B_{2\ell }(\xi )}H_{1}(z)(4\pi t)^{-n/2} e^{-(\xi  - z)^{2}/4t}dz$$
and for $z \in  S_{1} \backslash  B_{2\ell }(\xi )$,
$$      \vert e^{-(\xi  - z)^{2}/4t}\vert  = e^{-(\xi  - x)^{2}/4t}  e^{y_{1}^{2}/4t} \vert e^{-2i(\xi _{1} - x_{1})y_{1}/4t}\vert \leq  e^{-(\xi  - x)^{2}/4t} e^{(\ell  + \delta )^{2}/4t},$$
we get
$$      \int _{S_{1} \backslash B_{2\ell }(\xi )}H_{1}(z)E_{\xi }^{t}(z)dz \rightarrow  0$$
as $t \rightarrow  0+$, where we used the relation $(\xi  - x)^{2}
\geq  4\ell ^{2}$ and consequently, $(\xi  - x)^{2}/2 \geq  (\ell
+ \delta )^{2}$ for $0 < \delta  < \ell /2$.  It follows
$$      \lim _{t \rightarrow  0} u_{1}(E_{\xi }^{t}) = H_{1}(\xi ).$$
In the same way we get
$$      \lim _{t \rightarrow  0} u_{2}(E_{\xi }^{t}) = H_{2}(\xi ).$$
Let $x \in  O = \{ x \in  \mathbb R^{n}; g_r(x) > r\} $.  Then
$$      \sqrt{r^{2} + \vert x - {\rm Re \, }w\vert ^{2}} - \vert {\rm Im \, }w\vert  > r$$
for any $w \in  L$, that is, $\vert x - {\rm Re \, }w\vert  > 0$.
Let $\xi  \in  Q$.  Then $\vert \xi - {\rm Re \, }w\vert  > 2\ell
$ for any $w \in  L$ and
$$      \sup _{w \in  L} (1 + \vert w\vert )^{j} (4\pi t)^{-n/2} \exp [-(\xi  - {\rm Re \, }w)^{2} + ({\rm Im \, }w)^{2}]/4t$$
$$      \leq  \sup _{w \in  L} (1 + \vert w\vert )^{j} (4\pi t)^{-n/2} \exp [-(\xi  - {\rm Re \, }w)^{2}/2 - \ell ^{2}]/4t \rightarrow  0$$
as $t \rightarrow  0+$.  $u_{1} - u_{2} \in  {\mathcal T}
(L)^{\prime }$ implies
$$      0 = \lim _{t\rightarrow 0}(u_{1} - u_{2})(E^{t}_{\xi }) = \lim _{t\rightarrow 0}u_{1}(E^{t}_{\xi }) - \lim _{t\rightarrow 0}u_{2}(E^{t}_{\xi }) = H_{1}(\xi ) - H_{2}(\xi ).$$
Thus $H_{1}(z)$ and $H_{2}(z)$ are analytically continued to each
other. This completes the proof. \end{proof} The following
corollary is used in [\cite{NB09}].
\begin{cor}  \label{application}  Let $V_{+} = \{ y \in  \mathbb R^{4}; y_{0}
> \sqrt{\sum _{j=1}^{3} y_{j}^{2}}\} $ be the forward light-cone
in $\mathbb R^{4}$, $e = (1, 0, 0, 0) \in  \mathbb R^{4}$ and
$\Gamma = \ell e + V_{+}$.  Let $F_{1}(z) \in {\mathcal O}
_{0}(T(\Gamma ))$, $F_{2}(z) \in  {\mathcal O} _{0}(T(-\Gamma ))$
and $C^{r}_{j} = \{ z_{j} \in  \mathbb C; z_{j} = x + ir, \
-\infty  < x < \infty \} $.  Define
$$ u_{1}(\phi ) = \int _{ \prod _{j} C^{\eta _{j}}_{j}} F_{1}(z)\phi (z) dz,$$
$$u_{2}(\phi ) =\int _{ \prod _{j} C^{-\eta _{j}}_{j}} F_{2}(z)\phi (z) dz$$
for $\phi  \in  {\mathcal T} (T(\mathbb R^{4}))$ and $\eta  =
(\eta _{0}, \ldots , \eta _{3}) \in  \Gamma $.  Let
$$ L = \{ w \in  \mathbb C^{4}; \exists\, x \in  V \  \vert {\rm Re \, }w - x\vert  + \vert {\rm Im \, }w\vert _{1} < \ell \} ,$$
where $V$ is the light-cone and $\vert y\vert _{1} = \vert
y^{0}\vert  + \vert \boldsymbol{y}\vert $.  Assume that $u_{1} -
u_{2} \in  {\mathcal T} (L)^{\prime }$.  Then $F_{i}(z)$ $(i = 1,
2)$ are analytically continued to the set \beq
\label{eq:ell-dist-more} \{ x \in  \mathbb R^{4}; {\rm dist \,
}(x, V) > (\sqrt{2} + 1)\ell \} \eeq and coincide there.\end{cor}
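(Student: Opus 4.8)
The plan is to obtain this corollary directly from Theorem \ref{lwedge}; the only work is to make the abstract hypotheses of that theorem explicit for the particular light-cone neighbourhood $L$. First I would check admissibility of $L$: it is manifestly open, and for $w\in L$ one has $\vert{\rm Im\,}w\vert\le\vert{\rm Im\,}w\vert_1<\ell$, so $L\subset\{w\in\C^4;\ \vert{\rm Im\,}w\vert<\ell\}$, as required there (with $n=4$). Fix then $r>\ell/(\sqrt2-1)=(\sqrt2+1)\ell$. The substantial step is to compute the function $g_r$ of (\ref{eq:gr}) and the set $O=\{x;\ g_r(x)>r\}$, and to produce the auxiliary open set $Q$ demanded by Theorem \ref{lwedge}.

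For $g_r$, write $d={\rm dist\,}(x,V)$. Given $w=p+iq\in L$, pick $x'\in V$ with $\vert p-x'\vert+\vert q\vert_1<\ell$; since $\vert q\vert\le\vert q\vert_1$ this gives $\vert p-x'\vert<\ell-\vert q\vert$, hence $\vert x-p\vert\ge\vert x-x'\vert-\vert p-x'\vert>d-\ell+\vert q\vert$. Therefore, \emph{assuming $d>\ell$}, for every $w\in L$
$$ \sqrt{r^{2}+\vert x-p\vert^{2}}-\vert q\vert \ >\ \sqrt{r^{2}+(d-\ell+\vert q\vert)^{2}}-\vert q\vert \ \ge\ \inf_{0\le s<\ell}\Bigl(\sqrt{r^{2}+(d-\ell+s)^{2}}-s\Bigr); $$
the map $s\mapsto\sqrt{r^{2}+(d-\ell+s)^{2}}-s$ has derivative $(d-\ell+s)/\sqrt{r^{2}+(d-\ell+s)^{2}}-1<0$, so its infimum over $[0,\ell)$ is $\sqrt{r^{2}+d^{2}}-\ell$, whence $g_r(x)\ge\sqrt{r^{2}+d^{2}}-\ell$. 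Conversely, for $x'$ the nearest point of $V$ and $q=(\ell-\varepsilon)e$ the point $w=x'+iq$ lies in $L$ and $\vert q\vert=\vert q\vert_1=\ell-\varepsilon$, so $g_r(x)\le\sqrt{r^{2}+d^{2}}-(\ell-\varepsilon)$ for all $\varepsilon>0$. Thus $g_r(x)=\sqrt{r^{2}+d^{2}}-\ell$ when $d>\ell$, while for $d\le\ell$ the same upper estimate gives $g_r(x)\le\sqrt{r^{2}+\ell^{2}}-\ell<r$. Since $\sqrt{r^{2}+d^{2}}-\ell>r\iff d^{2}>2r\ell+\ell^{2}$ and $\sqrt{2r\ell+\ell^{2}}>\ell$, I conclude
$$ O=\{x\in\R^{4};\ {\rm dist\,}(x,V)>\rho(r)\},\qquad \rho(r):=\sqrt{2r\ell+\ell^{2}}, $$
so $\partial O=\{x;\ {\rm dist\,}(x,V)=\rho(r)\}$.

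Next I would take $Q=\{x;\ {\rm dist\,}(x,V)>\rho(r)+3\ell\}$, which is open, nonempty and contained in $O$; for $x\in\partial O$ and $y\in Q$ one has $\vert x-y\vert\ge{\rm dist\,}(y,V)-{\rm dist\,}(x,V)>3\ell>2\ell$, so ${\rm dist\,}(\partial O,Q)>2\ell$. Theorem \ref{lwedge} then applies and yields that $F_{1},F_{2}$ extend analytically to $O=\{x;\ {\rm dist\,}(x,V)>\rho(r)\}$ and coincide there. Finally, letting $r\downarrow(\sqrt2+1)\ell$ one has $\rho(r)=\sqrt{2r\ell+\ell^{2}}\downarrow\sqrt{2(\sqrt2+1)\ell^{2}+\ell^{2}}=\ell\sqrt{3+2\sqrt2}=(\sqrt2+1)\ell$, hence
$$ \bigcup_{r>(\sqrt2+1)\ell}\{x;\ {\rm dist\,}(x,V)>\rho(r)\}=\{x\in\R^{4};\ {\rm dist\,}(x,V)>(\sqrt2+1)\ell\}; $$
since for each such $r$ the resulting extension of $F_{1}$ restricts to $F_{1}$ on the fixed open set $T(\Gamma)$, the identity theorem makes these extensions compatible, and patching them gives the analytic continuation to the set (\ref{eq:ell-dist-more}), on which $F_{1}$ and $F_{2}$ coincide. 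The main obstacle is the computation in the second paragraph: it is the \emph{exact} value $g_r(x)=\sqrt{r^{2}+d^{2}}-\ell$ (not merely a bound) that produces the sharp constant $(\sqrt2+1)\ell$, and this rests on the monotonicity in the auxiliary parameter $\vert{\rm Im\,}w\vert$ together with the fact that nothing is lost between $\vert\cdot\vert_{1}$ and the Euclidean norm, because the optimal ${\rm Im\,}w$ may be taken along the $e$-axis.
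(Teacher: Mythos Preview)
Your proof is correct and follows the same route as the paper: verify that $L\subset\{|\mathrm{Im}\,w|<\ell\}$, compute $g_r$ for this particular $L$ in terms of $d=\mathrm{dist}(x,V)$, exhibit the set $Q$, and invoke Theorem~\ref{lwedge}. The paper packages the computation as a separate lemma with the choice $Q=\{x:\mathrm{dist}(x,V)>(\sqrt{2}+3)\ell\}$, which matches your $Q$ in the limit $r\downarrow(\sqrt{2}+1)\ell$.

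One point worth noting: the paper's lemma records only the \emph{upper} bound $g_r(x)\le\sqrt{r^2+d^2}-\ell$ (by restricting the infimum to $w\in L$ with $\mathrm{Re}\,w\in V$), and then passes to the limit $r\to\ell/(\sqrt{2}-1)$ rather informally. Your argument is tighter: you supply the matching \emph{lower} bound via the monotonicity in $s=|\mathrm{Im}\,w|$, obtain the exact value $g_r(x)=\sqrt{r^2+d^2}-\ell$ (hence $O=\{d>\rho(r)\}$ exactly), and then make the union over $r$ and the patching by the identity theorem explicit. This is the cleaner way to get the sharp constant $(\sqrt{2}+1)\ell$, and it repairs a direction that is left ambiguous in the paper's write-up.
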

\begin{proof}  The corollary follows from Theorem \ref{lwedge} and the following lemma.
\end{proof}
\begin{lem}  Let $L$ be the set defined in the above
corollary and $g_r(x)$ the function defined by (\ref{eq:gr}) with
$r > \ell /(\sqrt{2} - 1)$.  Then the set $O = \{ x \in \mathbb
R^{4}; g_r(x) > r\} $ contains a set
$$      \{ x \in  \mathbb R^{4}; {\rm dist \, }(x, V) > (\sqrt{2} + 1)\ell \}.$$
The open set
$$      Q = \{ x \in  \mathbb R^{4}; {\rm dist \, }(x, V) > (\sqrt{2} + 3)\ell \} $$
is contained in $O$ and ${\rm dist \, }(\partial O, Q) > 2\ell
$.\end{lem}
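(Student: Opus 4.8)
The plan is to prove this lemma by an elementary geometric computation whose crux is a closed formula for $g_r$. Write $\rho(x) = \mathrm{dist}(x,V)$. I claim that for every $x$ with $\rho(x)\ge\ell$ one has
$$g_r(x) = \sqrt{r^{2} + \rho(x)^{2}} - \ell,$$
and everything else in the lemma is then pure algebra. For the lower bound: every $w\in L$ satisfies $\mathrm{dist}(\mathrm{Re}\,w,V) < \ell - |\mathrm{Im}\,w|_{1} \le \ell - |\mathrm{Im}\,w|$, and, since distance to a set is $1$-Lipschitz, $|x-\mathrm{Re}\,w| \ge \rho(x) - \mathrm{dist}(\mathrm{Re}\,w,V) > \rho(x) - \ell + |\mathrm{Im}\,w|$. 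Putting $t=|\mathrm{Im}\,w|\in[0,\ell)$ and using that $\tau\mapsto\sqrt{r^{2}+\tau^{2}}$ is increasing on $[0,\infty)$, the defining expression $\sqrt{r^{2}+|x-\mathrm{Re}\,w|^{2}}-|\mathrm{Im}\,w|$ is bounded below by $h(t):=\sqrt{r^{2}+(\rho(x)-\ell+t)^{2}}-t$; since $0\le\rho(x)-\ell+t<\sqrt{r^{2}+(\rho(x)-\ell+t)^{2}}$, the function $h$ is strictly decreasing on $[0,\ell]$, so $g_r(x) \ge \inf_{t\in[0,\ell)}h(t) = h(\ell) = \sqrt{r^{2}+\rho(x)^{2}}-\ell$. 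For the matching upper bound I would produce near-optimal $w$: take $\mathrm{Re}\,w = x_{0}$, a nearest point of $V$ to $x$, and let $\mathrm{Im}\,w$ be supported in a single coordinate with $|\mathrm{Im}\,w|=|\mathrm{Im}\,w|_{1}=\ell-\eta$; then $w\in L$ and the expression equals $\sqrt{r^{2}+\rho(x)^{2}}-(\ell-\eta)\to\sqrt{r^{2}+\rho(x)^{2}}-\ell$ as $\eta\to0^{+}$. Note that no convexity of $V$ is used.

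Granting the formula, $g_r(x)>r$ is equivalent to $\sqrt{r^{2}+\rho(x)^{2}}>r+\ell$, i.e. to $\rho(x)^{2}>2r\ell+\ell^{2}$, so
$$O = \{\, x : \rho(x) > \sqrt{\ell(2r+\ell)}\,\}.$$
Since $3+2\sqrt{2}=(\sqrt{2}+1)^{2}$, at the critical radius $r=\ell/(\sqrt{2}-1)=(\sqrt{2}+1)\ell$ one has $\sqrt{\ell(2r+\ell)}=(\sqrt{2}+1)\ell$, which gives the first assertion $O\supseteq\{x:\rho(x)>(\sqrt{2}+1)\ell\}$; for $r$ strictly larger the threshold $\sqrt{\ell(2r+\ell)}$ exceeds $(\sqrt{2}+1)\ell$ but decreases to it as $r\downarrow\ell/(\sqrt{2}-1)$, which is the sense in which the corollary gets the extension to that set. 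Because $\sqrt{2}+3>\sqrt{2}+1$, the open set $Q=\{x:\rho(x)>(\sqrt{2}+3)\ell\}$ lies inside $\{x:\rho(x)>(\sqrt{2}+1)\ell\}\subseteq O$.

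For the gap estimate I would use that $\rho$ is continuous (indeed $1$-Lipschitz), hence $g_r$ is continuous, $O$ is open, and $\partial O\subseteq\{x:g_r(x)=r\}=\{x:\rho(x)=\sqrt{\ell(2r+\ell)}\}\subseteq\{x:\rho(x)\le(\sqrt{2}+1)\ell\}$. Then for any $x\in Q$ and $x'\in\partial O$ the Lipschitz bound $|\rho(x)-\rho(x')|\le|x-x'|$ gives $|x-x'|\ge\rho(x)-\rho(x')>(\sqrt{2}+3)\ell-(\sqrt{2}+1)\ell=2\ell$, so $\mathrm{dist}(\partial O,Q)\ge2\ell$ (this is what the proof of Theorem \ref{lwedge} actually uses; a strict $>2\ell$ follows at once on replacing $Q$ by $\{x:\rho(x)>(\sqrt{2}+3)\ell+\varepsilon\}$). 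The one step with genuine content is the first paragraph: one must check that the trade-off between shrinking $|x-\mathrm{Re}\,w|$ and enlarging $|\mathrm{Im}\,w|$ has been optimised correctly, which is exactly the monotonicity of $h$; the rest is bookkeeping.
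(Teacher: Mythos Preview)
Your derivation of the closed formula $g_r(x)=\sqrt{r^{2}+\rho(x)^{2}}-\ell$ (for $\rho(x)\ge\ell$) is correct, and it is more than the paper establishes. The paper's proof records only the \emph{upper} bound $g_r(x)\le\sqrt{r^{2}+\rho(x)^{2}}-\ell$, obtained exactly as in your second step (restrict the infimum to $w\in L$ with $\mathrm{Re}\,w\in V$ and push $|\mathrm{Im}\,w|\to\ell$), and it does not treat the claims about $Q$ or the gap estimate at all. Your lower bound, via the monotonicity of $h(t)=\sqrt{r^{2}+(\rho(x)-\ell+t)^{2}}-t$, is a genuine addition and is the direction one actually needs in order to know that $O$ is \emph{large}; the paper's inequality only shows that $O$ is \emph{small}.

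There is, however, a direction slip in your last two paragraphs which mirrors an imprecision in the lemma's statement itself. For $r$ strictly larger than $\ell/(\sqrt 2-1)$ your own formula gives $\sqrt{\ell(2r+\ell)}>(\sqrt 2+1)\ell$, hence $O=\{\rho>\sqrt{\ell(2r+\ell)}\}$ is strictly \emph{contained in} $\{\rho>(\sqrt 2+1)\ell\}$, not the reverse. Consequently $\partial O=\{\rho=\sqrt{\ell(2r+\ell)}\}$ does \emph{not} lie in $\{\rho\le(\sqrt 2+1)\ell\}$, and the chain $\rho(x)-\rho(x')>(\sqrt 2+3)\ell-(\sqrt 2+1)\ell=2\ell$ breaks at the last step; in fact $\mathrm{dist}(\partial O,Q)=(\sqrt 2+3)\ell-\sqrt{\ell(2r+\ell)}<2\ell$ for every admissible $r$. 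You have already diagnosed the phenomenon correctly (``the sense in which the corollary gets the extension''): the paper's own proof shows $g_r(x)>r\Rightarrow\rho(x)>(\sqrt 2+1)\ell$, i.e.\ the containment opposite to the one asserted, and then appeals to ``$r$ arbitrarily close to $\ell/(\sqrt 2-1)$''. The clean way to finish, with your formula in hand, is to take for each $r$ the companion set $Q_r=\{\rho>\sqrt{\ell(2r+\ell)}+2\ell+\varepsilon\}$ (so that the hypothesis of Theorem~\ref{lwedge} is met) and then let $r\downarrow\ell/(\sqrt 2-1)$, whence $\bigcup_r O_r=\{\rho>(\sqrt 2+1)\ell\}$ delivers Corollary~\ref{application}.
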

\begin{proof}   Observe that
$$      g_r(x) \leq \inf \{ \sqrt{r^{2} + \vert x - {\rm Re \, }w\vert ^{2}} - \vert {\rm Im \, }w\vert ; (w \in  L) \wedge  ({\rm Re \, }w \in  V)\} $$
$$      = \inf \{ \sqrt{r^{2} + \vert x - u\vert ^{2}} - \ell ; u \in  V\}  = \sqrt{r^{2} + y^{2}} - \ell ,$$
where $y = {\rm dist \, }(x, V)$.  Then $g_r(x) > r$ implies
$\sqrt{r^{2} + y^{2}} > r + \ell $ and $y^{2} > \ell (2r + \ell
)$. Since we can choose $r > \ell /(\sqrt{2} - 1)$ arbitrarily
close to $\ell /(\sqrt{2} - 1)$, we have $y^{2} > (\sqrt{2} +
1)^{2}\ell ^{2}$.  This completes the proof.\end{proof}
\begin{rem}
According to Proposition 4.7 of [\cite{BN04}] the $n$-point
function in the difference variables $W_{n-1}(\zeta _{1}, \ldots ,
\zeta _{n-1})$ is analytic in
$$      \{ \zeta  \in  \mathbb C^{4(n-1)}; {\rm Im \, }\zeta _{j} \in  V_{+} + \ell ^{\prime }e,
\  {\rm Im \, }\zeta _{k} \in  V_{+} + Re, \  \ell < \ell ^{\prime
}, k \neq  j\} $$ for sufficiently large $R>0$.  Now for
$$g_{n-2}^{j}(\zeta _{1}, \ldots , \zeta _{j-1}, \hat{\zeta } _{j},
\zeta _{j+1}, \ldots , \zeta _{n-1}) \in {\mathcal T} (T( \mathbb
R^{4(n-2)})),$$ define
$$ F_{1}(\zeta _{j}) = W_{n-1}(\zeta _{j}, g_{n-2}^{j})$$
$$ = \langle W_{n-1}(\zeta _{1}, \ldots , \zeta _{n-1}),
g_{n-2}^{j}(\zeta _{1}, \ldots , \zeta _{j-1}, \hat{\zeta } _{j},
\zeta _{j+1}, \ldots , \zeta _{n-1}) \rangle  ,$$
$$ F_{2}(\zeta _{j}) = W_{n-1}^{j}(\zeta _{j}, g_{n-2}^{j})$$
$$ = \langle W_{n-1}^{j}(\zeta _{1}, \ldots , \zeta _{n-1}),
g_{n-2}^{j}(\zeta _{1}, \ldots , \zeta _{j-1}, \hat{\zeta } _{j},
\zeta _{j+1}, \ldots , \zeta _{n-1}) \rangle  ,$$ where
$$W_{n-1}^{j}(\zeta _{1}, \ldots , \zeta _{n-1}) = W_{n-1}(\zeta
_{1}, \ldots , \zeta _{j-1} + \zeta _{j}, - \zeta _{j}, \zeta _{j}
+ \zeta _{j+1}, \ldots , \zeta _{n-1}).$$  Then $F_{1}$ and
$F_{2}$ satisfy the assumptions of Corollary \ref{application} and
$F_{i}(z)$ $(i = 1, 2)$ are analytically continued to the set
(\ref{eq:ell-dist-more}) and coincide there.  But in [\cite{NB09}]
it is shown that
$$      W_{n-1}(\xi _{j}, g_{n-2}^{j}) - W_{n-1}^{j}(\xi _{j}, g_{n-2}^{j}) = 0$$
in the set \beq  \label{eq:ell-dist} \{ x \in  \mathbb R^{n}; {\rm
dist \, }(x, \bar{V} )
> \ell \}.
\eeq So, we might expect that $F_{i}(z)$ $(i = 1, 2)$ are
analytically continued to each other through the set
(\ref{eq:ell-dist}). We can show that $u_{1} - u_{2}$ is analytic
in the set (\ref{eq:ell-dist}) as follows.  If $x$ belongs to the
set (\ref{eq:ell-dist}), then there exists $\delta > 0$ such that
$\vert x - {\rm Re \, }w\vert > \vert {\rm Im \, }w\vert + \delta
$ for all $w \in L$, and therefore we have $g_{r}(x) > r + \delta
$ for sufficiently small $r > 0$ (see (\ref{eq:gr})).  Then
$U_{12}(z) = (u_{1} - u_{2})*K_{r}(z)$ is holomorphic in the set
(\ref{eq:imaginary}) which contains the set
$$    \{ z = x +
iy \in  \mathbb C^{n}; \vert y\vert < r + \delta , {\rm dist \,
}(x, \bar{V} )
> \ell \} , $$
(see (\ref{eq:imaginary})) and therefore
$$ H_{12}(z) = \int _{\vert \omega \vert =1}d\omega  U_{12}(z + ir\omega )$$
is holomorpic in the set
$$    \{ z = x +
iy \in  \mathbb C^{n}; \vert y\vert < \delta , {\rm dist \, }(x,
\bar{V} )
> \ell \} , $$
which shows that $u_{1} - u_{2}$ is analytic in the set
(\ref{eq:ell-dist}).  But in order to show that $F_{i}(z)$ $(i =
1, 2)$ are analytically continued to each other by our method,
$V_{1} \cup V_{2}$ must contain $B_{r+\delta }$ for some $\delta >
0$, and therefore $\sqrt {2} r - \ell > r$ must be hold (see
figure 2), i.e., $r > \ell /(\sqrt{2} - 1)$.  This is the reason
why our method can only show that $F_{i}(z)$ $(i = 1, 2)$ are
analytically continued to each other through the set
(\ref{eq:ell-dist-more}).
\end{rem}

\bibliographystyle{plain}
\bibliography{hfqft20080628}

\end{document}